\documentclass[12pt]{article}
\pdfoutput=1
\usepackage{comment}
\usepackage{soul}
\usepackage{geometry}
\usepackage{changepage}
\usepackage{amsmath, amssymb, bbm, amsthm, mathabx, bigints}

\newtheorem{theorem}{Theorem}

\newcommand{\He}[2]{\text{\normalfont He}_{#1}\!\left(#2\right)}
\usepackage{float}
\usepackage{setspace}
\usepackage[font=small]{caption}
\usepackage[colorlinks,citecolor=DarkBlue,linkcolor=DarkBlue,bookmarks=false,hypertexnames=true, urlcolor=blue]{hyperref} 
\usepackage[svgnames]{xcolor} 
\usepackage{stfloats} 
\usepackage[american]{babel}

\usepackage[affil-it]{authblk} 
\usepackage{bbm}

\usepackage{fancyhdr}
\usepackage[font=footnotesize,labelfont=sc]{caption}
\usepackage{accents}

\renewcommand{\thesection}{\Roman{section}} 
\usepackage{titlesec}
\usepackage{lipsum}
\usepackage{lmodern}

\renewcommand{\thesection}{\arabic{section}}
\renewcommand{\thesubsection}{\thesection.\arabic{subsection}}
\renewcommand{\thesubsubsection}{\thesubsection.\arabic{subsubsection}}

\titleformat{\section}
  {\centering\scshape\normalsize}
  {\thesection. }{0.1em}{}

\titleformat{\subsection}{\centering\normalsize\itshape}{\thesubsection\ }{0.1em}{}

\titleformat{\subsubsection}[runin]{\normalsize\itshape}{\thesubsubsection}{1em}{}

\usepackage{graphicx}
\usepackage{grffile}
\usepackage{tablefootnote}
\usepackage{dcolumn}
\usepackage{tabularx, afterpage}

\usepackage{pdflscape}
\usepackage{pdfpages}

\usepackage{footnote}
\makesavenoteenv{tabular}
\usepackage[multiple]{footmisc}

\usepackage[toc,page, title]{appendix}

\usepackage{csquotes}

\newcommand{\independent}{\perp \!\!\! \perp} 

\usepackage{natbib}

\renewenvironment{abstract}
 {\small
  \begin{center}
  \bfseries \abstractname\vspace{-.5em}\vspace{0pt}
  \end{center}
  \list{}{%
    \setlength{\leftmargin}{21mm}
    \setlength{\rightmargin}{\leftmargin}%
  }%
  \item\relax}
 {\endlist}

\theoremstyle{plain}

\newtheorem{assumption}{Assumption}

\newtheorem{remark}{Remark}

\newtheorem{lemma}{Lemma}

\newtheorem{corollary}{Corollary}

\newtheorem{recommendation}{Recommendation}

\usepackage{chngcntr}
\usepackage{apptools}
\AtAppendix{\counterwithin{lemma}{section}} 
\AtAppendix{\counterwithin{assumption}{section}}
\AtAppendix{\counterwithin{theorem}{section}}

\begin{document}

\makeatletter
\patchcmd{\@maketitle}{\LARGE \@title}{\fontsize{12}{12}\selectfont\@title}{}{}
\newcommand\customsize{\@setfontsize\customsize{8.5}{11}}
\makeatother

\renewcommand\Authfont{\fontsize{12}{12}\selectfont} 

\newcommand\blfootnote[1]{%
  \begingroup
  \renewcommand\thefootnote{}\footnote{#1}%
  \addtocounter{footnote}{-1}%
  \endgroup
}

\font\myfont=cmr12 at 16pt
\title{{\LARGE How Replicable Are Statistically Significant Findings?\thanks{
We thank Jonathan Roth and seminar participants at the University of Queensland and the University of Western Australia for helpful comments and suggestions.
}}}

\author[1]{\large Patrick Vu\thanks{Department of Economics, University of New South Wales. Email: \href{[patrick.vu@unsw.edu.au]}{patrick.vu@unsw.edu.au}} and
Stefan Faridani\thanks{Georgia Institute of Technology. Email: \href{[sfaridani6@gatech.edu]}{sfaridani6@gatech.edu}}.}

\date{\today}

\renewcommand{\thefootnote}{\fnsymbol{footnote}}
\setcounter{footnote}{0}   
\maketitle
\thispagestyle{empty}
\renewcommand{\thefootnote}{\arabic{footnote}}
\setcounter{footnote}{0}

\maketitle
\begin{abstract}
In the empirical sciences, significance thresholds often determine whether findings are treated as evidence of an effect. This paper studies how likely findings that just meet conventional significance thresholds are to remain significant in replications of the same sample size. To answer this question, we estimate the expected replication probability conditional on a given $p$-value among published studies for experimental economics, psychology, and social science. We validate this measure by showing it accurately predicts actual replication outcomes, outperforming prediction markets. A finding with a $p$-value of 0.05 has an expected replication probability ranging from 0.10 to 0.25 across fields. Low replicability reflects low power in original studies rather than publication bias. We then develop a nonparametric estimator and apply it to economics literatures that use larger samples, finding higher but still low replication probabilities. These results indicate that statistical significance in a single study provides only suggestive evidence of an effect. Stronger conclusions require cumulative evidence.
\end{abstract}


\newpage
\setcounter{page}{1}

\section{Introduction}
In the empirical sciences, conventional significance thresholds are widely used to establish evidence of an effect. In some cases, a statistically significant empirical claim is repeated using the same methodology and a new sample \citep{OpenScience2015, Camerer2016, Camerer2018}. A significant replication is deemed `successful' and strengthens the original claim, while an insignificant replication is deemed `unsuccessful' and weakens it.



Given the ubiquity of significance testing in applied research, this paper asks a simple but fundamental question: ``How likely are findings that just meet conventional significance thresholds to remain significant in replications of the same sample size?'' If significant findings are unlikely to remain significant in replication, then the initial result may provide weaker grounds for claiming a nonzero effect than is commonly supposed. This concern differs from critiques that significance testing obscures economic significance or encourages $p$-hacking. Instead, in this paper, we evaluate significance testing on its own terms.


To answer this question, we derive the predictive power curve in the \citet{Andrews2019} framework of selective publication. Predictive power is the probability of replication for a published finding conditional on a given $p$-value, where replication success is defined using the standard criterion of significance at the 5\% level in the same direction as the original estimate \citep{OpenScience2015, Camerer2016, Camerer2018}. Within a given empirical literature, it can be interpreted as the expected replication probability of a randomly chosen published finding with that $p$-value.\footnote{A $p$-value below 5\% does not, by itself, imply a high probability of replication at the 5\% level. A significant finding may reflect a large true effect, but it may also reflect a favorable sample realization from a small or null effect, in which case it is unlikely to replicate. More generally, for any $p$-value in $(0,1)$, the replication probability can in principle range from 0.025 to arbitrarily close to 1.} We define predictive power under the original design by considering a replication with the same sample size as the original study.\footnote{In practice, replication studies often use larger sample sizes than the original study, which tends to overstate replication rates relative to using the same sample size. Moreover, common power calculations that choose the replication sample size to detect the original estimate with a prespecified power yield expected replication power strictly below that target \citep{Vu2024}.} Importantly, the framework does not require that an actual replication is conducted, or is even feasible. Predictive power is closely related to statistical power and can be applied equally to observational studies, as shown in Section~\ref{section:nonparametric}.




Predictive power has two noteworthy properties. First, as the conditional mean of the replication outcome, it minimizes expected squared prediction error. This provides strong theoretical grounds for using it to estimate replicability at conventional significance thresholds. Second, predictive power is invariant to publication bias. This counterintuitive property arises because publication bias changes the distribution of observed findings \textit{across} $p$-values, but not the distribution of true effects \textit{within} a given published $p$-value, on which predictive power is conditioned. Hence, low predictive power at a given $p$-value cannot be attributed to publication bias.


We estimate the predictive power curve in three empirical literatures with systematic replication evidence: experimental economics \citep{Camerer2016}, psychology \citep{OpenScience2015}, and experimental social science \citep{Camerer2018}. Predictive power is completely determined by the distribution of true effects. We draw directly on estimates from \citet{Vu2024}, whose empirical applications use the `metastudy approach' from \citet{Andrews2019} to estimate this distribution under parametric assumptions.

The advantage of examining literatures with systematic replication evidence is that the estimated predictive power curves can be validated against realized replication outcomes. Predictive power forecasts actual replication outcomes with remarkable accuracy: we cannot reject perfect calibration, and it achieves lower average squared prediction error than prediction markets, despite using far less information. High predictive accuracy provides strong empirical justification for using the measure to evaluate replicability, and also suggests a simple, low-cost alternative to prediction markets for forecasting replication outcomes.


We next turn to the paper's main question: how replicable are findings that narrowly satisfy conventional significance thresholds when replicated using the original sample size? In all three applications, and across all standard conventional significance thresholds, the estimated predictive power curves point to the same conclusion: replication probabilities are generally quite low. For example, in experimental economics---which fares better than psychology and social science---a finding with a $p$-value of 0.05 has a one-in-four chance of being successfully replicated. In other words, a same-sized replication of a just-significant result is three times as likely to be insignificant as significant. At $p=0.01$, the expected replication probability remains relatively low at 0.40. Notably, moving from ``marginal significance'' at the 10\% level to statistical significance at the 5\% level increases expected replication probabilities from 0.19 to 0.25, implying little change in replicability. More generally, achieving high replication probabilities requires far stronger statistical evidence than demanded by conventional thresholds. In experimental economics, achieving a replication probability of at least 0.8 requires $p<0.00006$, a standard met by 16.7\% of findings in the sample. We interpret this as a diagnostic benchmark rather than a recommendation to impose stricter thresholds.

To broaden the applicability of our framework, we develop a nonparametric estimator of predictive power. Its principal advantages are that it imposes no parametric assumptions on the distribution of true effects and can still be used when each study reports multiple correlated $p$-values. Its main drawback is that it cannot be applied to the replication studies due to the small number of observations. Instead, we apply it to a large dataset of experimental and quasi-experimental findings in economics \citep{Brodeur2020}. For randomized control trials, the chance of replicating a finding with $p=0.05$ is 0.34. Quasi-experimental studies using observational data have notably higher replication probabilities at $p=0.05$, ranging from 0.43 to 0.50, although replicability remains low in absolute terms. Across all applications, differences in predictive power can largely be explained by differences in median sample size. Taken together, the parametric and nonparametric estimates reinforce the same conclusion: replicability at conventional significance thresholds is relatively low.



How should statistically significant findings be interpreted in light of these results? Conventional thresholds play a central role in determining whether a finding---often based on a single study---is viewed as evidence of an effect. Yet we find that an empirical result in experimental economics with $p=0.05$ is expected to replicate with probability $\frac{1}{4}$. This does not imply that three-quarters of just-significant findings are false positives. Nor do unsuccessful replications necessarily imply errors in original studies. Rather, low replicability reflects the winner's curse \citep{WinnersCurseQJE}: even if estimates are unbiased before conditioning on their $p$-values, the majority of just-significant estimates overstate their true effects because statistical significance frequently reflects favorable sampling variation around a smaller true effect.

To clarify what these results imply for evidence against the null, Subsection~\ref{subsection:interpreting_replication_outcomes} examines how a successful or unsuccessful replication updates the probability that the true effect is close to zero, beginning with an original finding with $p=0.05$. For experimental economics, we find that a successful replication can substantially strengthen evidence against a near-null effect, reducing its probability from 0.16 to 0.02. By contrast, an unsuccessful replication provides only limited evidence in its favor, increasing the probability from 0.16 to 0.21. This asymmetry arises because a successful replication is highly unlikely when the true effect is near zero, whereas an unsuccessful replication remains relatively likely even when the effect is not close to zero.


Overall, we recommend interpreting statistical significance in a single study as suggestive evidence against the null. Robust scientific conclusions require cumulative evidence across multiple studies. These findings support the longstanding view that replication is fundamental to scientific credibility \citep{Popper1934}, and reinforce more recent calls to strengthen incentives for conducting and publishing replications \citep{Nosek2012,Coffman2015,Brodeur2023}.

\bigskip
\textbf{Related Literature.} The paper contributes to two strands of the metascience literature. First, it contributes to research on replications and statistical power \citep{Cohen1988, Ioannidis2005, Loken2017, Stanley2017, Andrews2019, DellaVigna2022}. It is most closely related to two recent papers. First, \citet{Goodman2022} estimates predictive power in a parametric model without publication bias for medical research and finds low replicability. Second, \citet{Lang2025} estimates 65\% of narrowly rejected null hypotheses in economics are likely to be false rejections. This paper builds on the literature in several respects. First, to our knowledge, it is the first study to validate predictive power against actual replication outcomes. Second, it models publication bias, for which there is substantial empirical evidence across scientific fields \citep{Card1995, Franco2014, Andrews2019}. Finally, it develops a nonparametric estimator of predictive power under weak assumptions.

Second, the paper contributes to the literature on predicting replications \citep{Dreber2015, Altmejd2019, DellaVigna2019, DellaVigna2020, Gordon2021}. Existing approaches use surveys, prediction markets, and machine learning methods to forecast replication outcomes. Our simple model-based forecast uses $p$-values from original studies yet nevertheless outperforms prediction markets. Thus, it offers a simple, low-cost alternative to more resource-intensive forecasting methods.



The remainder of the article is organized as follows. Section \ref{section:theory} derives predictive power. Section \ref{section:empirical} discusses estimation and validation. Section \ref{section:results} presents the parametric results and Section \ref{section:nonparametric} the nonparametric results. Section \ref{section:recommendations} makes recommendations for researchers and replicators. Section~\ref{section:conclusion} concludes.

\section{Theory}\label{section:theory}
This section introduces the framework, derives the predictive power curve, and discusses its properties. Proofs are in Appendix \ref{appendix:proofs}.

\subsection{Setup}\label{subsection:setup}
Consider an empirical literature composed of research findings indexed by $i$. For instance, the literature of interest could be based on field (e.g. labor economics, electoral politics, epidemiology) or methodology (e.g. experimental, quasi-experimental). Research findings are characterized by an (unobserved) true effect, an estimate of the true effect, and a standard error, $(\beta_i, \hat{\beta}_i, \sigma_i)$.\footnote{Following \cite{Andrews2019} and \cite{Wuthrich2022}, we assume for simplicity that the researcher observes the true standard error. Under standard regularity conditions, estimation error in the standard error is typically much smaller than sampling variation in the effect estimate.} We focus on normalized studies, characterized by $(z_i, \hat{z}_i)$, where $z \equiv \beta/\sigma$, $\hat z \equiv \hat \beta/\sigma$. For expositional simplicity, we refer to $z$ as the \textit{true effect} and $\hat z$ as the \textit{estimated effect}.

The data-generating process follows the selective publication model of \citet{Andrews2019} presented in normalized units:

\begin{enumerate}
    \item \textbf{Draw true effect:} Draw $z$ from the distribution of true effects: $z \sim \pi$.
    \item \textbf{Estimate the effect:} Draw an estimate $\hat z$ from a normal distribution centered at the true effect $z$: $\hat z \mid z \sim N(z, 1)$.
    \item \textbf{Publication selection:} Selective publication is modeled by a function $s(\cdot)$, which gives the probability of publication for an estimate $\hat z$. Let $D$ be a Bernoulli random variable equal to 1 if a study is published and 0 otherwise. Then $\mathbbm{P}(D = 1 | \hat z) =   s\left( \hat z \right)$.
\end{enumerate}

We observe $|\hat z|$ for published findings, $D=1$. Focusing on the absolute value reflects common empirical settings in which only the magnitude of the test statistic -- or, equivalently, the two-sided $p$-value -- is reported.

To illustrate the model, consider the quasi-experimental labor economics literature. Papers in this literature address different questions, such as the impact of the minimum wage on employment or the effect of a job-training program on earnings. Each question has an associated true effect $z$, drawn from the distribution $\pi(z)$. For the job-training example, $z$ represents the program's true average treatment effect, normalized by the study's standard error. In the second step, a researcher applies an empirical method -- such as a difference-in-differences design -- to obtain the estimate $\hat z$. Under mild regularity conditions, the resulting estimator is approximately normally distributed with a consistently estimable variance, a standard assumption used in empirical practice. In the final step, the estimate $\hat z$ is published with probability $s(\hat z)$. This captures selective publication; for instance, statistically significant findings may be more likely to be published than null results.

Finally, note that while $\pi(z)$ can be interpreted as a `prior' in a Bayesian hierarchical model, it is not subjective. Instead, it should be interpreted as the \textit{objective} (unobserved) distribution of true effects across studies in the empirical literature of interest.

\subsection{Replication Under Original Design}
\label{subsection:replications_statisticalpower}

Our main objective is to estimate the probability that a finding with a given $p$-value successfully replicates. Formally, consider a replication of an original finding $(z,\hat z)$. To evaluate replicability under the original design, we consider a replication that uses the same sample size as the original study. Thus, the replication estimate satisfies $\hat z_r \mid z \sim N(z,1)$ and is conditionally independent of $\hat z$ given $z$.


Replication success is defined using the standard criterion from large-scale replication studies: statistical significance at the two-sided 5\% level with the same sign as the original estimate \citep{OpenScience2015, Camerer2016, Camerer2018}. Specifically, let the replication indicator be $R =
\mathbbm{1}
\left\{
|\hat z_r| \geq 1.96
\text{ and }
\operatorname{sign}(\hat z_r)
=
\operatorname{sign}(\hat z)
\right\}$, where $1.96$ is the critical value for statistical significance at the two-sided 5\% level. This measure of replication success is closely related to statistical power.\footnote{Conventional one-sided power is the probability that the replication estimate is significant in the direction of the true effect, whereas our measure uses the direction of the original estimate. The two coincide unless the original estimate is statistically significant with the wrong sign, and are therefore likely to be similar when such sign errors are rare.} As with statistical power, it is also well-defined for observational studies for which a direct replication may be infeasible.

We focus on replications using the same sample size as original studies. In practice, replication studies use various sample-size rules, most of which produce larger samples and therefore higher replication rates.\footnote{Appendix \ref{appendix:alternative_rep_sample_size_rules} estimates that the rules used in the experimental economics \citep{Camerer2016}, psychology \citep{OpenScience2015}, and experimental social science \citep{Camerer2018} replication projects raise estimated replication rates by 30--36\% relative to the same-sample-size benchmark. More generally, replication rates depend mechanically on the chosen sample-size rule: for any non-zero effect, statistical power approaches one as sample size increases without bound and approaches the size of the test as sample size approaches zero.} Our benchmark instead evaluates replicability under the original research design, which provides a common basis for comparison across literatures that is independent of replicators' sample-size choices.

\subsection{Predictive Power Curve}\label{subsection:replication_power_curve}
Define \textit{predictive power} as the probability of a successful replication conditional on the observed two-sided $p$-value and publication: $r(p)\equiv \Pr(R=1\mid p,\text{published})$. Tracing $r(p)$ over $p \in (0,1)$ yields the \textit{predictive power curve}. Predictive power is a function of the two-sided $p$-value to reflect the common empirical practice of using two-sided tests. We express predictive power in terms of the $p$-value because statistical significance is typically reported using thresholds such as $p<0.05$ or $p<0.01$. This is without loss of generality because the two-sided $p$-value maps one-to-one to $|\hat z|$.

It is useful to distinguish predictive power from other common definitions of power. \textit{Planned power} is the probability of obtaining a statistically significant result under a prespecified effect size. \textit{Actual power} is this probability under the true effect. Finally, \textit{post hoc power} substitutes the original estimate for the true effect. Because it is a one-to-one transformation of the $p$-value, it provides no additional information about the evidence contained in the original estimate \citep{Hoenig2001}.

The following theorem gives a convenient representation of predictive power:

\begin{theorem}[Predictive Power]\label{theorem:predictive_power}
Let $s(\cdot)$ be symmetric about zero and strictly positive. Then for every $p\in(0,1)$, predictive power is given by
\begin{align}
r(p)
=
\frac{
\int_{\mathbb R}
\bigl[1-\Phi(1.96-z)\bigr]\,
\varphi\!\left(cv(p)-z\right)\,
\widetilde{\pi}(z)\,dz
}{
\int_{\mathbb R}
\varphi\!\left(cv(p)-z\right)\,
\widetilde{\pi}(z)\,dz
}
\label{equation:posterior_replication_power}
\end{align}
where $cv(p)\equiv\Phi^{-1}(1-p/2)$ and $\widetilde{\pi}(z) \equiv \frac{\pi(z)+\pi(-z)}{2}$ is the symmetrized distribution of normalized true effects.
\end{theorem}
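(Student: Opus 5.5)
Writing $c \equiv cv(p)$, conditioning on the two-sided $p$-value is the same as conditioning on the event $|\hat z| = c$. I will compute $\Pr(R=1 \mid |\hat z| = c, D=1)$ as a ratio of joint densities. The plan has three parts: write the joint law of $(z,\hat z, D, \hat z_r)$, let the publication probability cancel using symmetry of $s$, and fold the two branches $\hat z = \pm c$ into a single integral against the symmetrized $\widetilde\pi$.

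\textbf{Step 1 (joint density).} Under the data-generating process of Subsection~\ref{subsection:setup}, the joint density of $(z,\hat z)$ together with $D=1$ is $\pi(z)\,\varphi(\hat z - z)\, s(\hat z)$. Since $\hat z_r$ is conditionally independent of $(\hat z, D)$ given $z$, we have
\[
\Pr(R=1 \mid z, \hat z, D=1) = 1-\Phi\bigl(1.96 - \operatorname{sign}(\hat z)\, z\bigr),
\]
where the sign of $\hat z$ enters through the definition of $R$. The density of $|\hat z|$ at $c>0$, jointly with $D=1$, sums the contributions from $\hat z = c$ and $\hat z = -c$. Hence the numerator of $r(p)$ is
\[
\int \bigl\{[1-\Phi(1.96-z)]\,\varphi(c-z)\,s(c) + [1-\Phi(1.96+z)]\,\varphi(-c-z)\,s(-c)\bigr\}\,\pi(z)\,dz,
\]
and the denominator is the same expression without the bracketed replication probabilities.

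\textbf{Step 2 (cancel $s$).} Because $s(c)=s(-c)>0$, the factor $s(c)$ comes out of both the numerator and the denominator and cancels. Strict positivity guarantees that we never divide by zero and that conditioning on $D=1$ is well defined. This step is the source of the paper's claim that predictive power is invariant to publication bias.

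\textbf{Step 3 (symmetrize).} In the second term, substitute $z\mapsto -z$ and use that $\varphi$ is even, so $\varphi(-c+z)=\varphi(c-z)$. The second term becomes $\int [1-\Phi(1.96 - z)]\,\varphi(c-z)\,\pi(-z)\,dz$. Adding the two terms gives $2\int [1-\Phi(1.96-z)]\,\varphi(c-z)\,\widetilde\pi(z)\,dz$, and the denominator becomes $2\int \varphi(c-z)\,\widetilde\pi(z)\,dz$ in the same way. The factors of $2$ cancel, which yields \eqref{equation:posterior_replication_power}.

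The main obstacle is making the conditioning on the measure-zero event $\{|\hat z|=c\}$ rigorous. I would handle it by defining $r(p)$ as the regular conditional probability given by the ratio of densities of $|\hat z|$ restricted to $D=1$. This needs the denominator to be finite and positive, which holds because $\varphi \le 1/\sqrt{2\pi}$ and $\varphi>0$ everywhere. If $\pi$ is a general measure rather than a density, the same argument goes through with $\pi(dz)$ in place of $\pi(z)\,dz$, and $\widetilde\pi$ is the average of $\pi$ and its reflection.
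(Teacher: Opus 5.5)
Your proposal is correct and takes essentially the same approach as the paper's proof: split over the two branches $\hat z=\pm cv(p)$, cancel $s$ using $s(c)=s(-c)>0$, and substitute $z\mapsto -z$ in the second branch to obtain $\widetilde{\pi}$. The only difference is presentational: the paper first computes $\Pr(R=1\mid |\hat z|=x,z,D=1)$ and the posterior density of $z$ given $|\hat z|=x$ and then multiplies them, so the factor $\varphi(x-z)+\varphi(x+z)$ cancels, whereas you form the ratio of joint densities directly and never introduce that factor.
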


Predictive power in \eqref{equation:posterior_replication_power} can be interpreted as the probability that a randomly chosen published finding with a given $p$-value is statistically significant at the 5\% level in the same direction as the original estimate in a replication using the same sample size as the original study. 

The assumptions that $s(\cdot)$ is symmetric about zero and strictly positive are relatively mild. Symmetry requires that publication depend on the magnitude rather than the sign of the test statistic. This is natural in the empirical literatures we examine, which address different research questions and therefore provide no general reason for positive estimates to be more publishable than negative ones. Strict positivity rules out regions of the test-statistic distribution in which publication is impossible. In practice, this is not restrictive, since meta-datasets contain published findings throughout the $p$-value distribution. The use of the symmetrized distribution is necessary because $\pi(\cdot)$ is not identified from absolute test statistics; only its symmetrization $\widetilde{\pi}(\cdot)$ is identified (Theorem~\ref{thm:identification_nonparm}). Nonetheless, the representation in Theorem~\ref{theorem:predictive_power} is exact because predictive power is invariant to replacing $\pi(\cdot)$ with $\widetilde{\pi}(\cdot)$.



Predictive power has three noteworthy features. First, it admits a decision-theoretic interpretation: since $R$ is binary, $r(p)$ is the conditional expectation of replication success given the original $p$-value, and therefore minimizes expected squared prediction error. This provides a theoretical justification for focusing on predictive power to evaluate expected replicability at conventional significance thresholds.

Second, predictive power does not depend on the publication bias function $s(\cdot)$. It is therefore invariant to arbitrary forms of selective publication. The reason for this counterintuitive result is that selection changes the relative frequency of different $p$-values but not the distribution of power conditional on a given $p$-value. In other words, since publication depends only on the observed $p$-value, it treats all studies with that $p$-value identically and therefore leaves the expected replication probability conditional on that $p$-value unchanged. Invariance to publication bias has substantive implications for interpreting $r(p)$. It implies that expected replicability at any fixed $p$-value is identical whether publication bias is highly prevalent or entirely absent. 

Third, predictive power is completely characterized by the symmetrized distribution of normalized true effects, $\widetilde{\pi}(z)$. Consequently, differences in predictive power across literatures arise entirely from differences in $\widetilde{\pi}(z)$. For instance, in a literature containing mostly large true effects, predictive power will be relatively high for a given $p$-value. By contrast, if $\widetilde{\pi}(z)$ places substantial mass near zero, predictive power will be relatively low for the same $p$-value. Thus, the central empirical task is to estimate the symmetrized distribution of normalized true effects, which we turn to next.\footnote{Following \cite{Andrews2019}, we treat $\pi$ throughout as a density. This simplifies the exposition and the proofs for the nonparametric results. The analysis can be extended to discrete or mixed
distributions of true effects without changing the nonparametric estimator, similarly to \cite{faridani2026testingunderpoweredliteratures}.}

\section{Estimation and Validation}\label{section:empirical}
This section estimates the predictive power curve for three empirical literatures and validates its performance against observed replication outcomes. This section focuses on literatures with available replications, which permit direct validation. Section~\ref{section:nonparametric} develops a nonparametric estimator of $r(p)$ and applies it to a larger dataset without replication outcomes. 


\subsection{Data}
We examine three large-scale replication projects that systematically replicated published studies. First, \citet{Camerer2016} replicate experimental economics results from all 18 between-subjects laboratory experiments published in \textit{American Economic Review} and \textit{Quarterly Journal of Economics} between 2011 and 2014. Second, \citet{OpenScience2015} replicate results from 100 psychology findings in 2008 from \textit{Psychological Science}, \textit{Journal of Personality and Social Psychology}, and \textit{Journal of Experimental Psychology: Learning, Memory, and Cognition}. Following \citet{Andrews2019}, we consider a subsample of 73 studies with test statistics that are well-approximated by $z$-statistics. Finally, \citet{Camerer2018} replicate 21 experimental studies in the social sciences published between 2010 and 2015 in \textit{Science} and \textit{Nature}.

\subsection{Parametric Estimation}
Predictive power in equation \eqref{equation:posterior_replication_power} is fully determined by the symmetrized distribution of true effects, $\widetilde{\pi}(z)$. Although publication bias does not enter the expression for $r(p)$, it must still be accounted for when estimating $\widetilde{\pi}(z)$ from published studies. Ignoring selective publication would generally yield inconsistent estimates of $\widetilde{\pi}(z)$, and hence of $r(p)$. To estimate this distribution, we use the metastudy approach of \citet{Andrews2019}, with parameter estimates taken from Table 1 in \citet{Vu2024}. Estimates from \citet{Vu2024} are used because \citet{Andrews2019} estimate the distribution of estimates $\beta$ but not of standard errors, so their estimates do not recover the distribution of $z$ required to calculate predictive power.

The metastudy approach corrects for publication bias and assumes that true effects and standard errors are independent, each following a gamma distribution. It treats findings within each empirical literature as i.i.d. draws from the model described in Subsection~\ref{subsection:setup}. It normalizes true effects to have positive support, so we symmetrize the estimated distribution around zero before calculating predictive power, as required by Theorem~\ref{theorem:predictive_power}. Since our target is the symmetrized $\widetilde{\pi}(z)$, this is a normalization rather than a substantive restriction.\footnote{Equivalently, we can retain the nonnegative distribution of true effects and calculate the probability that the replication estimate is significant with the same sign as the original estimate. This yields identical results.} 

A key advantage of the metastudy approach is that it uses only data from original studies. It can therefore be applied to any suitable meta-sample, regardless of whether replications have been conducted. In our applications, systematic replication evidence is used only to validate the resulting predictive power estimates. The main limitation of the metastudy approach is it assumes that true effects and standard errors are independent. In Section~\ref{section:nonparametric}, we develop a nonparametric estimator that relaxes the independence assumption and the parametric distributional assumptions.

As a robustness check, Figure \ref{figure:posterior_power_curves_repest} in Appendix~\ref{appendix:empirical_results} repeats the analysis using the systematic-replication approach of \citet{Andrews2019}, which uses both original and replication studies and requires weaker identifying assumptions. The resulting predictive power curves are quantitatively similar to those obtained using the metastudy approach.

\subsection{Validation}\label{subsection:validation}
Do the estimated predictive power curves accurately predict whether original studies actually replicate? Answering this question is central to establishing the empirical validity of the measure. If predictive power predicts observed replication outcomes, then it validates the measure and justifies its use in evaluating replicability. Conversely, if it fails to predict observed outcomes, any conclusions drawn from it lack empirical support and should be interpreted with caution.

Unlike the same-sample-size benchmark used in our main analysis, the validation exercise uses the sample-size rules actually implemented in the replication studies. We therefore construct predictive power curves for each field using a modified version of equation \eqref{equation:posterior_replication_power} that replaces original-study standard errors with replication standard errors. This is feasible because, in each application, replication sample sizes were set using prespecified rules.\footnote{Specifically, we replace $1-\Phi(1.96-z)$ with $1-\Phi(1.96-z_r)$, where $z_r=\beta/\sigma_r$ and $\sigma_r$ is determined by the replicators' sample-size rule.} See Lemma \ref{lemma:posterior_power_fpr} in the Appendix for details.

To assess predictive accuracy, we estimate the following linear probability model:
\begin{equation}
R_{il}=\alpha+\beta\hat{R}_{il}+\epsilon_{il},
\label{equation:validation_regression}
\end{equation}
\noindent
where $R_{il}$ is an indicator equal to one if study $i$ in empirical literature $l$ (i.e.\ economics, psychology, or social science) successfully replicates with significance and the same sign as the original estimate, and zero otherwise; and $\hat{R}_{il}$ is its predicted replication probability. Estimation assigns equal weight to each literature. 

We compare predictive power based on the metastudy method with prediction-market forecasts compiled by \citet{Gordon2021}. Before the replications were conducted, academics traded contracts whose final market prices represent aggregate predicted replication probabilities. Because prediction markets were conducted for only a subset of replicated findings, the validation exercise uses the common sample of 70 studies for which both predictions are available.

Figure \ref{figure:model_fit} presents the results graphically using binscatter plots with fitted lines. Visually, both predictors are highly predictive of realized replication outcomes. Remarkably, the fitted line for predictive power lies almost exactly on the 45-degree benchmark of perfect calibration, under which predicted replication probabilities equal realized replication rates. By contrast, the fitted prediction-market line lies below the benchmark, indicating that prediction markets are overly optimistic about replication success. Formally, a joint test of perfect calibration ($\alpha=0$, $\beta=1$) does not reject the null for predictive power ($p=0.762$), but marginally rejects it for prediction markets at the 10\% level ($p=0.084$).

\begin{figure} [t]
	\centering
 	\caption{Replication Prediction Accuracy}
	\label{figure:model_fit}
\includegraphics[width=0.625\textwidth]{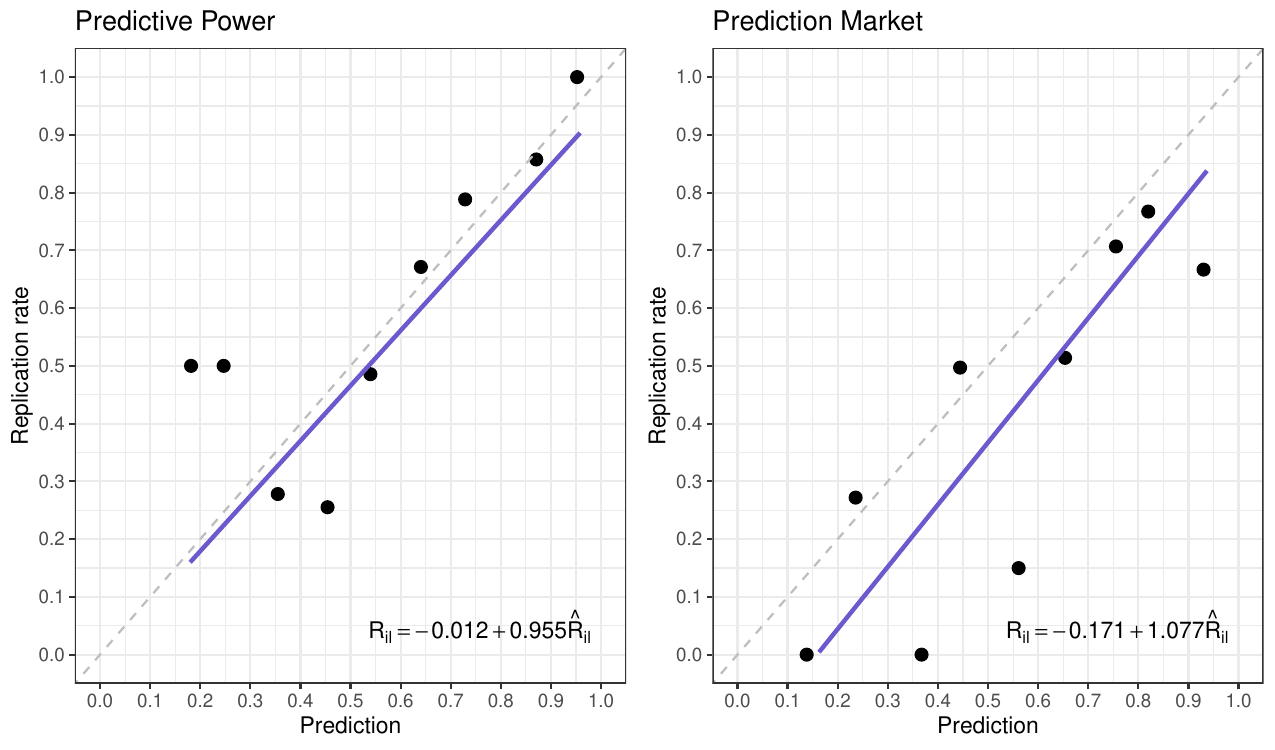} 
\caption*{\textit{Notes}: The figure presents binscatter plots corresponding to equation \eqref{equation:validation_regression} for predictions from predictive power and prediction markets. Solid lines show fitted values from estimating equation \eqref{equation:validation_regression}, and dashed lines show the 45-degree benchmark. Perfect calibration is assessed using a joint test of $\alpha=0$ and $\beta=1$. The corresponding $p$-values are 0.762 for predictive power and 0.084 for prediction markets. Prediction-market forecasts are from \citet{Gordon2021}.}
\end{figure} 

Table \ref{table:predictive_accuracy} complements the calibration regressions by comparing predictors using average squared prediction error, also known as the Brier score.\footnote{The Brier score is strictly proper: a forecaster minimizes expected loss by reporting their true subjective probability. This makes it suitable for comparing probability forecasts because it rewards honest, well-calibrated probabilities rather than strategic overstatement or understatement.} Calibration in Figure~\ref{figure:model_fit} assesses whether predicted probabilities match realized replication rates on average, whereas the Brier score also rewards forecasts that distinguish between studies with different replication prospects. For example, assigning every study a probability of 0.30 is perfectly calibrated if 30\% replicate, but provides no information about which studies are more likely to replicate. The Brier score addresses this by penalizing prediction errors at the study level, thereby favoring forecasts that are both well calibrated and informative.

In the weighted pooled sample, predictive power slightly outperforms prediction markets, with Brier scores of 0.216 and 0.220, respectively. This is quite striking given its limited information set. It relies only on test statistics from original studies, combined with the simple model outlined in Section \ref{subsection:setup}. Since the distribution of true effects is estimated from the full distribution of test statistics in the literature, each predicted replication probability places an individual result in the context of the broader body of evidence. By contrast, prediction markets aggregate the judgments of many expert scientists, whose information sets may include detailed knowledge of the specific research question, broader field expertise, perceptions of author quality, and evidence from earlier replication studies. That a parsimonious statistical model can slightly outperform this aggregation of expert judgments suggests that much of the signal relevant for replication is already embedded in the reported statistical evidence itself.

\begin{table}[!t] \centering 
 \scriptsize
\caption{Replication Forecasting Error}
  \label{table:predictive_accuracy} 
\begin{tabular}{@{\extracolsep{5pt}} lcc} 
\\[-1.8ex]\hline 
\hline \\[-1.8ex] 
 & Predictive Power & Prediction Market \\ 
\hline \\[-1.8ex] 
Economics  & $0.188$ & $0.226$ \\ 
Psychology  & $0.243$ & $0.243$ \\ 
Social Science  & $0.217$ & $0.190$ \\ 
All (Weighted) & $0.216$ & $0.220$ \\ 
\hline \\[-1.8ex] 
\end{tabular} 
\caption*{\textit{Notes}: This table reports Brier scores for predictive power and prediction-market forecasts. For finding $i$, the Brier score is $(R_i-\hat R_i)^2$, averaged across findings. Lower scores indicate greater predictive accuracy.}
\end{table} 

Overall, predictive accuracy supports using predictive power to evaluate replicability and offers a simple, low-cost alternative to prediction markets.

\section{Results}\label{section:results}

\subsection{Replicability at Conventional Significance Thresholds} 
We turn now to the main question of interest: how likely are findings that just satisfy conventional significance thresholds to replicate when using the same sample size as the original study? Figure~\ref{figure:posterior_power_curves} plots the predictive power curve for experimental economics, psychology, and social science. Evaluating the curve at the 5\% significance threshold, we find that replicability ranges between 0.1 and 0.25 across fields. Thus, a randomly chosen finding that just meets the most commonly used threshold for statistical significance has, at most, about a one-in-four chance of successful replication. At the more stringent 1\% level, replication probabilities remain relatively low, ranging between 0.23 and 0.40 across fields. Thus, even at the most stringent threshold typically used in empirical research, the chance of replication remains less than one-half across fields.

\begin{figure} [!t]
\centering
\caption{Predictive Power Curves, by Replication Project}
\label{figure:posterior_power_curves}
\includegraphics[width=0.64\textwidth]{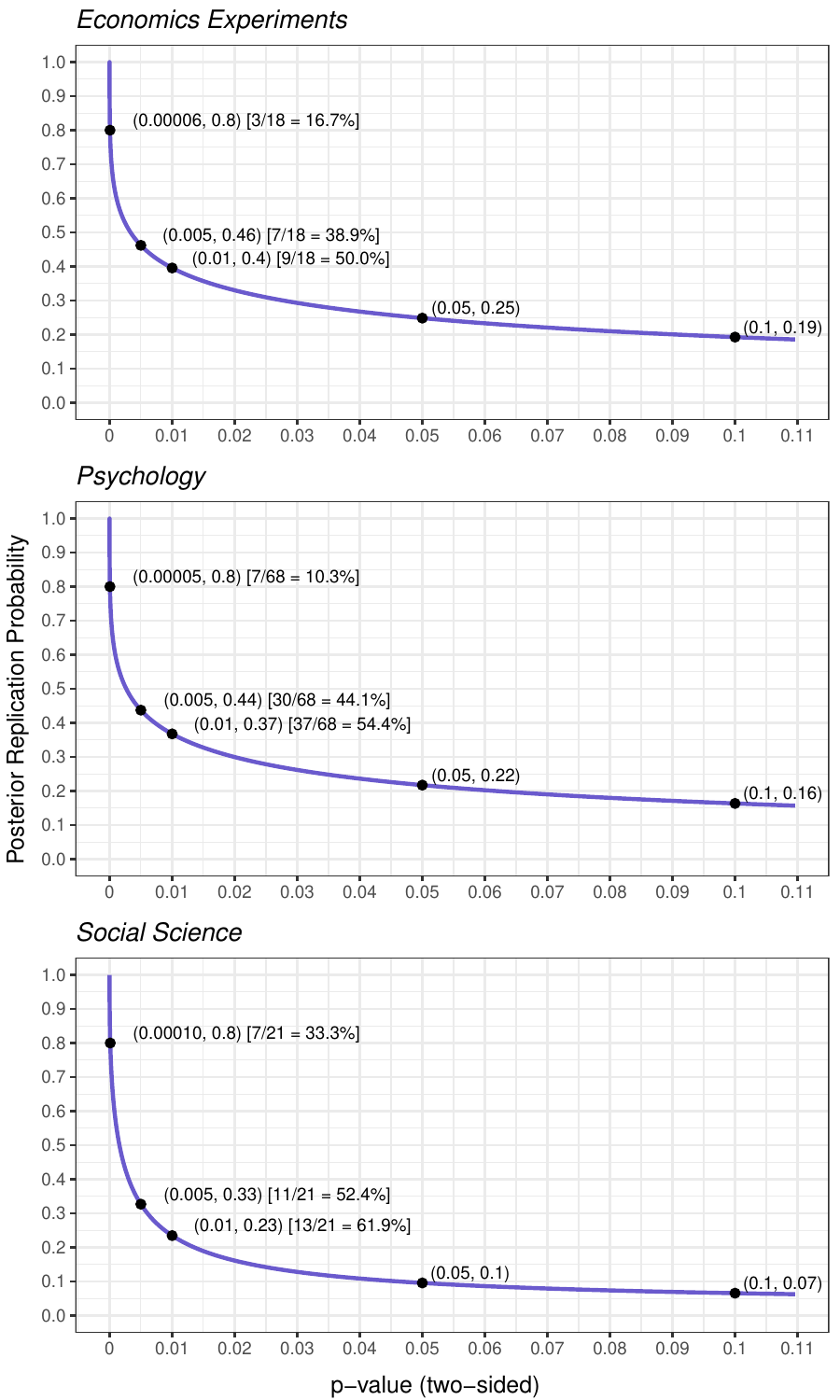}
\caption*{\textit{Notes}: The figure plots estimated predictive power curves for experimental economics, psychology, and experimental social science, using the metastudy approach to estimate $\tilde{\pi}(z)$. Predictive power is the probability that a replication with the same sample size as the original study produces an estimate that is statistically significant at the two-sided 5\% level and has the same sign as the original estimate, conditional on the original study's two-sided $p$-value. Square brackets report the percentage of original studies with $p$-values at or below the indicated threshold. Underlying data are from the replication projects of \citet{Camerer2016}, \citet{OpenScience2015}, and \citet{Camerer2018}, respectively.}
\end{figure}

These results come from a model that abstracts from several commonly proposed causes of low replicability, including $p$-hacking, researcher manipulation, measurement error, and heterogeneity between original and replication studies \citep{Ioannidis2005, Brodeur2016, Loken2017, Wuthrich2022}. Moreover, as discussed in Subsection~\ref{subsection:replication_power_curve}, predictive power is invariant to arbitrary forms of publication bias. The model nevertheless accurately predicts observed replication outcomes (Figure~\ref{figure:model_fit} and Table~\ref{table:predictive_accuracy}), suggesting that these factors are not necessary for explaining the low rates of replicability observed in practice.

The mechanism driving these results is instead the substantial mass that the estimated distributions of true effects place on small effects, leaving original studies with low statistical power at conventional significance thresholds. Appendix Figure~\ref{figure:power_bins} shows the distribution of predictive power conditional on $p=0.05$ for each application. A substantial share of just-significant findings have extremely low replicability: the share with replication probabilities below 5\% is 20\% in economics, 30\% in psychology, and 77\% in the social sciences. Thus, very low replication probabilities are relatively common across all fields and particularly prevalent in the social sciences.

These results imply that many just-significant findings have true effects substantially smaller than their original estimates. Figure~\ref{figure:posterior_z} in Appendix~\ref{appendix:empirical_results} plots the distributions of the true effect $z$ conditional on $p=0.05$ for all three applications. The distributions place substantial mass below the observed estimate of $\hat z=1.96$, although their shapes differ across fields: the distribution is centered at a positive but considerably smaller effect in economics, while it is concentrated closer to zero in psychology and especially in social science. This bias reflects the ``winner's curse" \citep{WinnersCurseQJE}: even though the original estimators are unbiased before conditioning on the $p$-value, just-significant estimates often combine relatively small true effects with favorable sampling variation, causing them to overstate the underlying effects. Consequently, many findings with $p=0.05$ are also likely to fail replication criteria based on similarity in effect magnitude, such as the relative effect size, $\hat z_r/\hat z$.


\subsection{Interpreting Replication Outcomes}\label{subsection:interpreting_replication_outcomes}
How should replication outcomes be interpreted in light of these results? We begin by clarifying what the results do not imply. Low replicability at conventional significance thresholds does not mean that most just-significant findings are `false'. Nor does an unsuccessful replication refute the original finding or necessarily indicate errors in the original study. Instead, replication outcomes should be interpreted as additional evidence about the null hypothesis that contributes to the cumulative evidence on an empirical claim. 

To formalize this idea, consider a researcher interested in whether the true effect is `close to zero'. This framing meets practitioners ``where they are,'' given the ubiquity of null-hypothesis significance testing in applied research.\footnote{For proposals to move beyond dichotomous significance testing, see \citet{McShane2019} and \citet{Amrhein2019}.} Define a \textit{near-null effect} as a true effect whose magnitude is small enough that a same-sized replication has less than a 5\% probability of achieving 5\% two-sided significance in the direction of the true effect.\footnote{We focus on near-null effects because the estimated distribution of true effects is continuous and therefore assigns zero probability to an effect of exactly zero.} Thus, a near-null corresponds to $\{|z|<0.315\}$.\footnote{For $z>0$, the probability that a same-sized replication is significant in the direction of the true effect is $1-\Phi(1.96-z)$. Equating this probability to 0.05 gives $z=1.96-\Phi^{-1}(0.95)\approx0.315$; symmetry gives the corresponding cutoff for $z<0$. Because the definition uses the normalized effect $z$, rather than the underlying treatment effect $\beta$, the cutoff does not represent a common substantive effect size across studies and should be viewed as an operational definition of a near-null effect.}


Let $q_0(p)\equiv\Pr(H_0\mid p)$ denote the probability of a near null conditional on the original $p$-value. Recall that $r(p)\equiv\Pr(R=1\mid p)$ is the probability of a successful replication, and define $r_0(p)\equiv\Pr(R=1\mid p,H_0)$ as the corresponding probability under a near null. By construction, $r_0(p)<0.05$. Then it follows that
\begin{equation}\label{equation:updating}
\begin{aligned}
\Pr(H_0\mid p,D=1,R=1)
&=
q_0(p)
\left(
\frac{r_0(p)}{r(p)}
\right)
\\
\Pr(H_0\mid p,D=1,R=0)
&=
q_0(p)
\left(
\frac{1-r_0(p)}{1-r(p)}
\right)
\end{aligned}
\end{equation}

The first term expresses the probability of a near null after a successful replication and the second after an unsuccessful replication. In both expressions, the first term is the probability of a near null based on the original evidence, and the second is the updating factor implied by the replication outcome.

\begin{figure}[!t]
\centering
\caption{Probability of Near Null Before and After Replication}
\label{figure:effective_null_updates}
\includegraphics[width=0.64\textwidth]{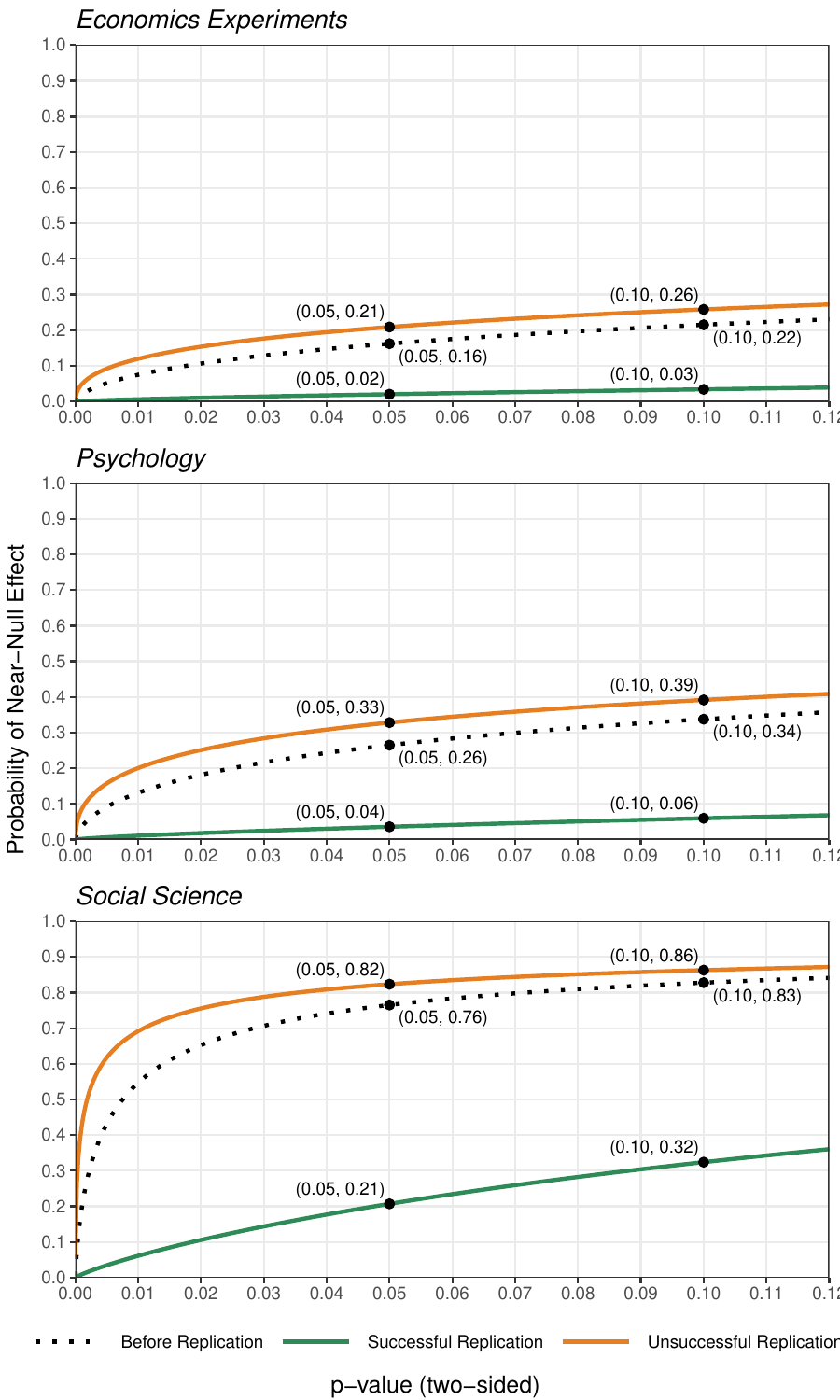}
\caption*{\textit{Notes}: The figure plots the probability of a near null before replication, after a successful replication, and after an unsuccessful replication, as a function of the original two-sided $p$-value. A near null is defined by $\{|z|<0.315\}$, corresponding to a true effect whose magnitude implies less than a 5\% probability that a replication achieves 5\% two-sided significance in the direction of the true effect. A successful replication is statistically significant at the 5\% level and has the same sign as the original estimate. Replications are assumed to use the same sample size as the original study. Underlying data are from \citet{Camerer2016}, \citet{OpenScience2015}, and \citet{Camerer2018}, respectively.}
\end{figure}
Figure~\ref{figure:effective_null_updates} illustrates the updating in \eqref{equation:updating}, plotting the probability of a near null effect before replication (dotted line), after a successful replication (green line), and after an unsuccessful replication (orange line). Most strikingly, for most $p$-values, a successful replication sharply lowers the probability of a near null, whereas an unsuccessful replication raises it only modestly.

For example, in experimental economics, a finding with $p=0.05$ has a 0.16 probability of being a near null before replication. A successful replication reduces this probability sharply to 0.02, whereas an unsuccessful replication increases it only modestly to 0.21. This asymmetry arises because near nulls have very low replication probabilities, ranging from 0.025 to 0.05. A successful replication is therefore highly unlikely under $H_0$ and produces a large downward update in \eqref{equation:updating}, whereas an unsuccessful replication remains relatively likely even when the effect is not close to zero, and therefore produces only a modest upward update. This asymmetry also appears in psychology and experimental social science, although near nulls are generally more likely in these fields for the same original $p$-value and replication outcome.

This asymmetry has implications for interpreting replication outcomes in practice: a successful replication can substantially strengthen evidence against a near null, whereas an unsuccessful replication generally provides only limited evidence in its favor. Thus, replication outcomes are best interpreted as cumulative evidence rather than binary confirmation or refutation.

\subsection{Additional Results}
We highlight two additional implications of the predictive power curves. The first concerns how strong the statistical evidence in the original study must be to achieve a high probability of replication. For any replication probability target $T$, inverting the predictive power curve yields the $p$-value cutoff $p^*$ satisfying $r(p^*)=T$. Lemma~\ref{lemma:uniqueness} proves that, under mild regularity conditions, this cutoff exists and is unique. Figure~\ref{figure:posterior_power_curves} shows that achieving predictive power of at least 80\% requires $p$-value cutoffs of 0.00006, 0.00005, and 0.0001 in economics, psychology, and the social sciences, respectively. Between 10\% and 33\% of findings in the respective samples meet these literature-specific benchmarks. Notably, these cutoffs are several orders of magnitude more stringent than conventional significance thresholds, and far below even the 0.005 threshold proposed in \citet{Benjamin2018}. Nevertheless, we do not interpret these cutoffs as a recommendation to mechanically adopt more stringent thresholds. Although stricter thresholds would increase the average replicability of findings that meet them, they might also discard informative evidence, disadvantage settings where very large samples are infeasible, and intensify specification searching.

The second additional result concerns the convexity of the predictive power curves. In current empirical practice, moving from a $p$-value of 0.10 to 0.05 is commonly treated as a material gain in credibility, marking a transition from ``marginal significance'' to genuine ``statistical significance.'' However, Figure~\ref{figure:posterior_power_curves} suggests that this transition does little to alter the underlying replicability of a finding: expected replication probabilities rise by only 3--6 percentage points, and remain at low levels. This implies that moving between conventional significance thresholds need not translate into meaningful improvements in replicability. 

This observation also has implications for forms of $p$-hacking that misreport $p$-values just above a conventional threshold as falling just below it. Unlike selective reporting, which changes whether a result is observed but not its reported $p$-value, this form of $p$-hacking changes the reported $p$-value itself. Consider a finding with a true $p$-value of 0.06 that is misreported as $p=0.049$ in order to claim significance; its expected replicability remains $r(0.06)$. Figure~\ref{figure:posterior_power_curves} shows that $r(0.06)$ and $r(0.049)$ are nearly identical across all fields. Thus, forms of $p$-hacking that nudge reported $p$-values across the 5\% significance threshold are unlikely to account for much of the low replicability observed in practice.

\section{Nonparametric Predictive Power Curve}\label{section:nonparametric}

Thus far, the empirical analysis has imposed parametric assumptions about the distribution of true effects. In this subsection, we develop a nonparametric estimator of $r(p)$ that imposes no functional-form restrictions on $\tilde \pi(z)$. The nonparametric estimator requires large meta-samples that include many reported $p$-values and cannot be applied to the relatively small replication datasets analyzed above. Given these data requirements, we apply the estimator to the large dataset of experimental and quasi-experimental economics findings compiled by \citet{Brodeur2020}.

Nonparametric identification of $r(p)$ is established formally in Appendix \ref{app:nonparametrics}. Discontinuities in the density of published $t$-ratios identify changes in publication probability across significance thresholds. After adjusting for these changes, recovering the distribution of true effects from the distribution of estimated $t$-ratios is a standard deconvolution problem with normally distributed errors.\footnote{See e.g. \cite{Carroll,Fan_global,CarrascoFlorens,faridani2026testingunderpoweredliteratures}.} Below, we focus on the principal challenge of constructing a consistent estimator of $r(p)$. Proofs are in Appendix \ref{app:nonparametrics}.

\subsection{Nonparametric Estimation}

Suppose we observe a sample of absolute values of $n$ published $t$-ratios, $\{|\hat z_i|\}_{i=1}^n$.  Our aim is to estimate the symmetrized density $\tilde{\pi}(z) \equiv \frac{\pi(z)+\pi(-z)}{2}$, which determines the target parameter $r(p)$ and is identified from the distribution of $|\hat z|$. Following Example 1 of \citet{CarrascoFlorens} and \citet{faridani2026testingunderpoweredliteratures}, we can express  $\tilde\pi(z)$ as a linear combination of known polynomial basis functions, $\chi_j$, with unknown coefficients $b_j$:
\begin{equation}\label{eq:pi_series}
    \tilde{\pi}(z) = \sum_{j=0}^\infty b_j\chi_j(cv(p)-z),
\end{equation}
\noindent where $\chi_j(z) \equiv \He{j}{z/\sqrt{2}}/\sqrt{j!}$ is the normalized $j$th probabilists' Hermite polynomial. 

\begin{remark}
    Estimating the full density $\tilde \pi$ is severely ill-posed because many distributions of true effects can generate nearly indistinguishable distributions of observed $t$-ratios \citep{Carroll,Fan_global}. Distinguishing among them relies on higher-order coefficients $b_j$, which capture fine features of the data and are highly sensitive to sampling error. Despite this difficulty, the key insight underlying our estimator is that higher-order coefficients have diminishing influence on $r(p)$ because it averages over $\tilde{\pi}(z)$ and is therefore relatively insensitive to its finer features. Consequently, imprecise estimates of $\tilde{\pi}(z)$ can yield accurate estimates of $r(p)$. A similar principle appears in \citet{faridani2026testingunderpoweredliteratures}, but targeting $r(p)$ yields a new estimator and a distinct convergence rate.
\end{remark}

In practice, we truncate the infinite expansion and estimate only the first $J_n$ coefficients. The cutoff $J_n$ grows logarithmically with the sample size, allowing the estimator to capture finer features of $\tilde{\pi}(z)$ as more data become available. Formally, this amounts to regularizing the deconvolution problem using a spectral cutoff based on the singular value decomposition from \cite{CarrascoFlorens}.\footnote{See Example 1 of \cite{CarrascoFlorens} for the singular value decomposition. The Gibbs fluctuations incurred by spectral cutoff do not prevent consistency because they are integrated away when we calculate $r(p)$ from $\tilde{\pi}(z)$.} Substituting this truncated expansion into Equation~(\ref{equation:posterior_replication_power}) and simplifying yields the following estimator:
\begin{align}\label{equation:np_estimator}
   \widehat{r}^{np}(p)
   \equiv
   1-\frac{\sum_{j=0}^{J_n}a_j\widehat{b}_j}
   {\sum_{j=0}^{J_n}d_j\widehat{b}_j}
\end{align}

\noindent where $\widehat{b}_j$ estimates the coefficient $b_j$ in \eqref{eq:pi_series} times a common constant (that cancels in the ratio). The known constants $a_j$ and $d_j$ capture the contribution of the $j$th basis function to the numerator and denominator, respectively, and can be computed numerically.\footnote{\raggedright More specifically,
$a_j \equiv \int_{\mathbb{R}}
\varphi(u)e^{-u^2/2}
\Phi\left(c-cv(p)+\sqrt{2}u\right)\psi_j(u)\,du$,
and
$d_j \equiv \int_{\mathbb{R}}
\varphi(u)e^{-u^2/2}\psi_j(u)\,du$.\par}

To estimate the coefficients $b_j$ consistently, we must account for selective publication of the observed $t$-ratios. As before, let $D=1$ indicate that a finding is published, and suppose its publication probability depends on the observed $t$-ratio through the even selection function $s_{\theta_0}$:
\begin{equation}\label{eq:w_definition_main}
    \Pr(D=1\mid \hat z )=s_{\theta_0}(\hat z)
\end{equation}
\noindent where the functional form of $s_\theta$ is known but the parameter vector $\theta_0\in \Theta$ is unknown.

The properties of the Hermite basis allow each coefficient $b_j$ to be expressed as a moment of the unselected distribution of estimated $t$-ratios. Given a consistent estimator $\widehat{\theta}_n$ of the publication-bias parameters, we follow \citet{faridani2026testingunderpoweredliteratures} and estimate this moment by weighting each published $t$-ratio by the inverse of its estimated publication probability:
\begin{align}\label{equation:np_coefficients}
    \widehat{b}_j
    &\equiv
    \frac{2^{j/2}}{n}\sum_{i=1}^n
    \frac{
         \psi_j\left(cv(p)-|\hat z_i|\right)
        \varphi\left(cv(p)-|\hat z_i|\right) +\psi_j\left(cv(p)+|\hat z_i|\right)
        \varphi\left(cv(p)+|\hat z_i|\right) 
    }{
        s_{\widehat{\theta}_n}\left(\hat z_i\right)
    }
\end{align}
where $\varphi(\cdot)$ is the standard normal density, $\psi_j(z)\equiv \He{j}{z}/\sqrt{j!}$ and $cv(p)\equiv\Phi^{-1}(1-p/2)$ is the standard normal critical value corresponding to the two-sided $p$-value.

Next we formalize our assumptions about the sample. In particular, we allow test statistics to be dependent within an article, but not across articles.
\begin{assumption}\label{assum:sample}
    The following all hold:
    \begin{enumerate}
        \item Let $|\hat z_1|,\ldots,|\hat z_n|$ be an identically distributed sample from the distribution of $|\hat z|$ conditional on publication, i.e. $D=1$.
        \item Sets of $\hat z_i$ drawn from different articles are independent.
        \item There is a universal constant $L\geq 1$ such that each article contributes no more than $L$ observations to the sample. 
    \end{enumerate}
\end{assumption}

Theorem \ref{thm:nonparametric_consistency_simplified} shows that, if the publication-bias parameters $\theta_0$ are known or consistently estimated by $\widehat{\theta}_n$, then the estimator $\widehat{r}^{np}(p)$ defined in \eqref{equation:np_estimator}, with coefficients given by \eqref{equation:np_coefficients}, is consistent for $r(p)$.\footnote{This result is a special case of the more general Theorem \ref{thm:nonparametric_consistency} in Appendix \ref{app:nonparametrics}, which allows the replication and original sample sizes to differ.}

\begin{theorem}\label{thm:nonparametric_consistency_simplified}
    Let the sample $|\hat z_1|,\ldots,|\hat z_n|$ satisfy Assumption \ref{assum:sample}. Suppose that the density $\pi(z)$ exists and has bounded height. Assume that $s_\theta$ is even and $\inf_{\theta\in\Theta,\,t\in\mathbb R}s_\theta(t)>0.$ Let $J_n=\left\lceil \frac{2}{3\log 2}\log(An)\right\rceil$ for any $A>0$. Assume further that an estimator $\widehat{\theta}_n$ satisfies: $$
        \sup_{t\in\mathbb{R}}
        n^{1/3}\left|
        \frac{1}{s_{\widehat{\theta}_n}(t)}
        -\frac{1}{s_{\theta_0}(t)}
        \right|
       =\mathcal{O}_p(1)$$ Then, for every $p\in(0,1)$,
    \[
        \widehat{r}^{np}(p)\xrightarrow{p}r(p)
    \]
\end{theorem}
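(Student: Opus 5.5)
The plan is to write $r(p)=1-N(p)/M(p)$, where
\[
M(p)=\int\varphi(cv(p)-z)\widetilde\pi(z)\,dz,\qquad
N(p)=\int\Phi(1.96-z)\,\varphi(cv(p)-z)\widetilde\pi(z)\,dz
\]
(Theorem~\ref{theorem:predictive_power}). Both $N(p)$ and $M(p)$ are linear functionals of $\widetilde\pi$ against Gaussian-weighted kernels. Since $M(p)>0$ and the map $(N,M)\mapsto 1-N/M$ is continuous at $M>0$, the continuous mapping theorem reduces the claim to
\[
\sum_{j\le J_n}a_j\widehat b_j\xrightarrow{p}\kappa N(p),\qquad
\sum_{j\le J_n}d_j\widehat b_j\xrightarrow{p}\kappa M(p)
\]
for a common constant $\kappa>0$. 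Each of these two convergences splits into three pieces, which I handle in the next three steps.

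\textbf{Step 1: exact representation.} Expanding $\widetilde\pi$ in the Hermite basis centred at $cv(p)$ as in \eqref{eq:pi_series}, I will verify that the infinite sums $\sum_j a_jb_j$ and $\sum_j d_jb_j$ equal $\kappa N$ and $\kappa M$. The Gaussian weight $\varphi(u)e^{-u^2/2}$ in $a_j,d_j$ makes the kernels lie in the weighted $L^2$ space spanned by the $\psi_j$.

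I will also confirm that the population moment $b_j^\ast=\mathbb E[\widehat b_j]$, computed with the true $s_{\theta_0}$, equals $\kappa b_j$. The argument runs as follows.
\begin{itemize}
\item Inverse-probability weighting by $1/s_{\theta_0}$ turns expectations under the published distribution into expectations under the unselected law of $\hat z$, up to the constant $\Pr(D=1)$. This uses that $s$ is even, so $|\hat z|$ suffices.
\item Convolving $\psi_j\varphi$ with the $N(z,1)$ noise yields the scaled basis function evaluated at $cv(p)-z$. The factor $2^{j/2}$ undoes the variance doubling, via the Hermite generating-function identity behind Example~1 of \citet{CarrascoFlorens}.
\item Symmetrizing over $\pm|\hat z|$ produces $\widetilde\pi$ rather than $\pi$.
\end{itemize}

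\textbf{Step 2: bias.} The truncation bias is $\sum_{j>J_n}|a_j b_j|$, and similarly with $d_j$. I will bound $|b_j|$ polynomially, using that $\pi$ is bounded and $\chi_j$ grows at most like $2^{j/2}$ times a Gaussian envelope. The coefficients $a_j,d_j$, however, decay geometrically, because the weight $\varphi(u)e^{-u^2/2}$ is Gaussian with variance $1/2$, and its projection onto $\psi_j$ decays like $c\,\rho^{j}$ for some $\rho<1$; this is exactly the mechanism described in the Remark. Hence the tail is $o(1)$ as $J_n\to\infty$.

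\textbf{Step 3: variance and plug-in error.} Under Assumption~\ref{assum:sample}, articles are independent and each contributes at most $L$ observations. The variance of $\widehat b_j$ with known $\theta_0$ is therefore at most $L\,n^{-1}2^{j}\sup_x|\psi_j\varphi|^2/\inf s$, which is $O(2^j/n)$ because $\psi_j\varphi$ is uniformly bounded by Cramér's inequality. Summing over $j$ with $|a_j|$ weights and applying Cauchy--Schwarz bounds the stochastic error by $O_p\bigl(2^{J_n/2}n^{-1/2}\bigr)$.

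For the plug-in error, replacing $1/s_{\theta_0}$ by $1/s_{\widehat\theta_n}$ changes each $\widehat b_j$ by at most $2^{j/2}C\sup_t|1/s_{\widehat\theta}-1/s_{\theta_0}|=O_p(2^{j/2}n^{-1/3})$. Summed up to $J_n$, this gives $O_p(2^{J_n/2}n^{-1/3})$ in the worst case, or less if the $a_j$ decay is used.

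With $J_n=\lceil\frac{2}{3\log 2}\log(An)\rceil$ we have $2^{J_n}\asymp n^{2/3}$, so both error terms are at most of order $n^{1/3}\cdot n^{-1/3}$. This is only $O_p(1)$ unless the geometric decay of $a_j,d_j$ is exploited.

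\textbf{Main obstacle.} I therefore expect the crux to be the tight bookkeeping between the $2^{j/2}$ growth of the estimated coefficients and the decay $|a_j|,|d_j|\lesssim\rho^j$. Concretely, I will try to show $\rho<2^{-1/2}$ by computing $d_j$ and $a_j$ in closed form via the Mehler kernel, so that $\sum_j|a_j|2^{j/2}<\infty$. Once that holds, both the stochastic and the plug-in errors are $O_p(n^{-1/3})=o_p(1)$, and the truncation tail vanishes, which completes the proof. If the decay is only borderline, I would fall back on a finer second-moment bound exploiting orthogonality of the $\psi_j$ in the variance computation.
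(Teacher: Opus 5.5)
Your architecture is the paper's: the Hermite expansion of $\widetilde\pi$ at $cv(p)$, the ratio $1-\sum_j a_jb_j/\sum_j d_jb_j$, truncation bias controlled by the decay of $a_j,d_j$, sampling and plug-in errors inflated by $2^{j/2}$, and $2^{J_n/2}\asymp n^{1/3}$. However, the step you single out as the crux would fail.

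You plan to prove $|a_j|,|d_j|\lesssim\rho^j$ with $\rho<2^{-1/2}$, so that $\sum_j|a_j|2^{j/2}<\infty$. This is false.
\begin{itemize}
\item For the pure Gaussian weight, $|d_{2k}|=\sqrt{(2k)!}/(\sqrt2\,4^kk!)\asymp 2^{-k}k^{-1/4}$. Hence $\sum_{j\le J_n}|d_j|2^{j/2}$ already grows like $J_n^{3/4}$.
\item For $a_j$, your Step 2 heuristic (projection of a Gaussian weight) does not apply, because the factor $\Phi(c-cv(p)+\sqrt2u)$ slows the decay. As in Lemma~\ref{lem:bound_numerator_coeffs}, $\sum_j a_jt^j/\sqrt{j!}=2^{-1/2}e^{-t^2/4}\Phi\bigl((c-cv(p))/\sqrt2+t/2\bigr)$. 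This is entire of order $2$ and type $1/4+1/8=3/8$: it grows like $e^{3R^2/8}$ along the imaginary axis, and the paper's Cauchy bound rules out faster growth.
\item The order--type formula then gives $\limsup_j|a_j|^{1/j}=\sqrt3/2>2^{-1/2}$. So $|a_j|2^{j/2}$ is unbounded along a subsequence, and no Mehler-kernel computation can deliver the summability you want.
\end{itemize}
Your fallback does not rescue this either. Orthogonality in a second-moment bound addresses the sampling term. The binding term is the plug-in error, which is controlled only through $\sup_t|1/s_{\widehat\theta_n}-1/s_{\theta_0}|=O_p(n^{-1/3})$.

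The missing observation is that summability is unnecessary. $J_n$ is chosen exactly so that $n^{-1/3}2^{J_n/2}=O(1)$. Then any geometric rate $\rho<1$ suffices, because $\sum_{j\le J_n}|a_j|2^{j/2}$ is dominated by its top term. The paper uses the bounds $|a_j|\lesssim j^{3/4}(3/4)^{j/2}$ (Cauchy's estimate for the generating function above on a circle of radius $\propto\sqrt j$) and $|d_j|\lesssim 2^{-j/2}$. With these:
\begin{itemize}
\item the estimation error in the numerator is $O_p\bigl(n^{-1/3}J_n^{3/4}(3/2)^{J_n/2}\bigr)=O_p\bigl(J_n^{3/4}(3/4)^{J_n/2}\bigr)$;
\item the estimation error in the denominator is $O_p(n^{-1/3}J_n)$;
\item the truncation bias is $O\bigl(J_n^{3/4}(3/4)^{J_n/2}\bigr)$.
\end{itemize}
The bias bound needs $|b_j|$ uniformly bounded. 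This follows from Bessel's inequality, since bounded $\pi$ gives $\langle\widetilde\pi_p,\widetilde\pi_p\rangle_Z<\infty$ with $\widetilde\pi_p(u)=\widetilde\pi(cv(p)-u)$. The justification you sketch ($\chi_j$ growing like $2^{j/2}$) would only give $|b_j|\lesssim2^{j/2}$, which would make the tail diverge against $|a_j|\approx(\sqrt3/2)^j$. All three terms vanish, giving the rate $n^{\frac13\log_2(3/4)}(\log n)^{3/4}$ of Theorem~\ref{thm:nonparametric_consistency} with $\gamma=1$ and $q=1/3$.
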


We complete the estimation procedure by specifying and estimating the publication-bias function. To allow for flexible forms of publication bias, we model the publication probability as a step function with $M$ known critical values $\{v_m\}_{m=1}^M$ where the publication probability of the most-significant $\hat z$ is normalized to one:
\begin{align}
    s_{\theta}(\hat z)
    &= \theta_{(1)}\mathbf{1}\left\{|\hat z| \leq v_1\right\}
    + \sum_{m=2}^{M}\theta_{(m)}
    \mathbf{1}\left\{|\hat z| \in (v_{m-1},v_{m}]\right\}
    + 
    \mathbf{1}\left\{|\hat z|>v_M\right\}
    \label{eq:stepfunction}
\end{align}

Given the known step locations, we estimate the publication-bias parameters using the caliper ratio estimator of \citet{faridani2026testingunderpoweredliteratures}. The estimator compares the numbers of published $t$-ratios within narrow intervals on either side of each critical value. Discontinuities in these counts identify changes in publication probability, from which the step heights can be recovered. The resulting estimates are uniformly consistent for $s_{\theta}$ and can be used in Theorem~\ref{thm:nonparametric_consistency_simplified}. For further details, see \citet{faridani2026testingunderpoweredliteratures}.

\subsection{Application}
We apply the nonparametric estimator to the dataset in \cite{Brodeur2020}, which consists of 21,740  $t$-ratios testing main hypotheses in 684 articles published in top economics journals. Each $t$-ratio corresponds to the test of a causal inference hypothesis where the study methodology is either a randomized controlled trial (RCT), Regression Discontinuity Design (RDD), Difference-in-Differences (DID), or Instrumental Variables (IV). Papers can contain multiple estimates. Following \cite{Kranz2022}, we de-round the $t$-ratios before estimation because rounding reported coefficients and standard errors can generate artificial bunching around conventional significance thresholds. The publication bias function, $s_{\theta}(\cdot)$, is specified to have steps at the three conventional two-sided critical values $1.645, 1.96,$ and $2.576$. Estimates for the publication probability function are presented in Appendix~\ref{appendix:publication_probability_ratios}.

\begin{figure}[!t]
    \centering
    \caption{Nonparametric Predictive Power Curves, by Research Design in Economics} \label{fig:brodeur-stacked-nonparametric}
    
 \includegraphics[width=\textwidth]{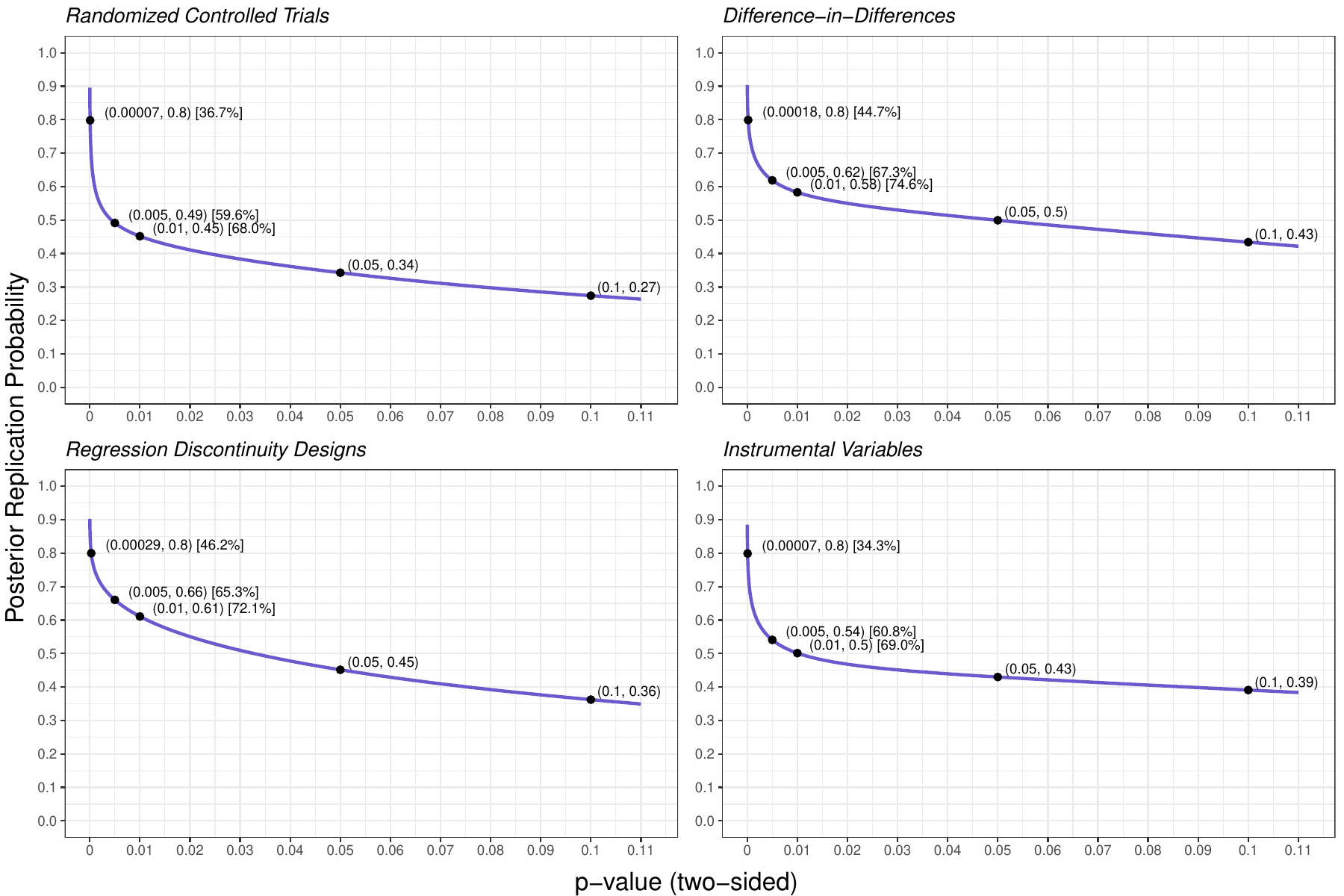}
\caption*{\textit{Notes}: The figure plots nonparametric estimates of the predictive power curve by research design in economics. Predictive power is evaluated for a replication with the same sample size as the original study. Underlying data are from \citet{Brodeur2020}.}
\end{figure}

To make the replication exercise concrete for observational studies, consider an RDD using the Current Population Survey (CPS) to estimate the effect of Medicare eligibility at age 65 on health-insurance coverage. Replication is a hypothetical repeated-sampling exercise: imagine drawing a new independent sample of the same size from the same population and policy environment and applying the identical RDD specification. Replication is successful if the resulting estimate is significant at the 5\% level and has the same sign as the original estimate.

Figure \ref{fig:brodeur-stacked-nonparametric} presents the nonparametric predictive power curves for the four research designs. An RCT finding with a $p$-value of 0.05 has an estimated replication probability of 0.34, modestly exceeding the parametric estimate of 0.25 for experimental economics. Predictive power at $p=0.05$ is notably higher for the observational designs, ranging between 0.43 and 0.50, although, in absolute terms, replication probabilities remain relatively low. Similar to the parametric analysis in Figure~\ref{figure:posterior_power_curves}, extremely small $p$-values are required to achieve predictive power of at least 80\%.

Differences in replicability across the seven applications are closely related to differences in sample size. Figure~\ref{figure:median_sample_size_r0.05} plots median sample size against predictive power at $p=0.05$. It shows that predictive power is approximately linear in the logarithm of median sample size, with the applications lying close to the fitted line. Higher predictive power in observational studies is consistent with their much larger samples. Similarly, economics RCTs—which include relatively large field experiments—have higher replicability than smaller laboratory studies in experimental economics. These patterns are consistent with replicability being determined primarily by the distribution of normalized true effects, $\tilde \pi(z)$, as emphasized in Section~\ref{section:results}. For a given underlying effect, a larger sample reduces the standard error and increases the normalized true effect, shifting $\tilde \pi(z)$ away from zero and increasing predictive power.

These results offer a different perspective on the relationship between research practices and replicability. A literature with more selective reporting and $p$-hacking may nevertheless be more replicable. For example, \citet{Brodeur2020} use discontinuities at conventional significance thresholds to argue that IV studies—and, to a lesser extent, DID studies—exhibit substantially more $p$-hacking than RCTs. Yet Figure~\ref{figure:median_sample_size_r0.05} shows that just-significant RCT findings have much lower expected replicability than IV and DID findings, consistent with their smaller sample sizes.

This is because replicability depends primarily on the distribution of normalized true effects, which is strongly influenced by sample size. The distinction matters because low replicability is frequently attributed to $p$-hacking and publication bias \citep{OpenScience2015, Camerer2016, Camerer2018}. For example, in a 2016 \textit{Nature} survey, 90\% of researchers identified ``selective reporting'' as contributing to irreproducible research, more than any other factor \citep{Baker2016}. Against this background, our results suggest greater caution in attributing low replicability primarily to selective reporting and $p$-hacking.

\begin{figure} [!t]
\centering
\caption{Median Sample Size and Expected Replicability at $p=0.05$}
\label{figure:median_sample_size_r0.05}
\includegraphics[width=0.5\textwidth]{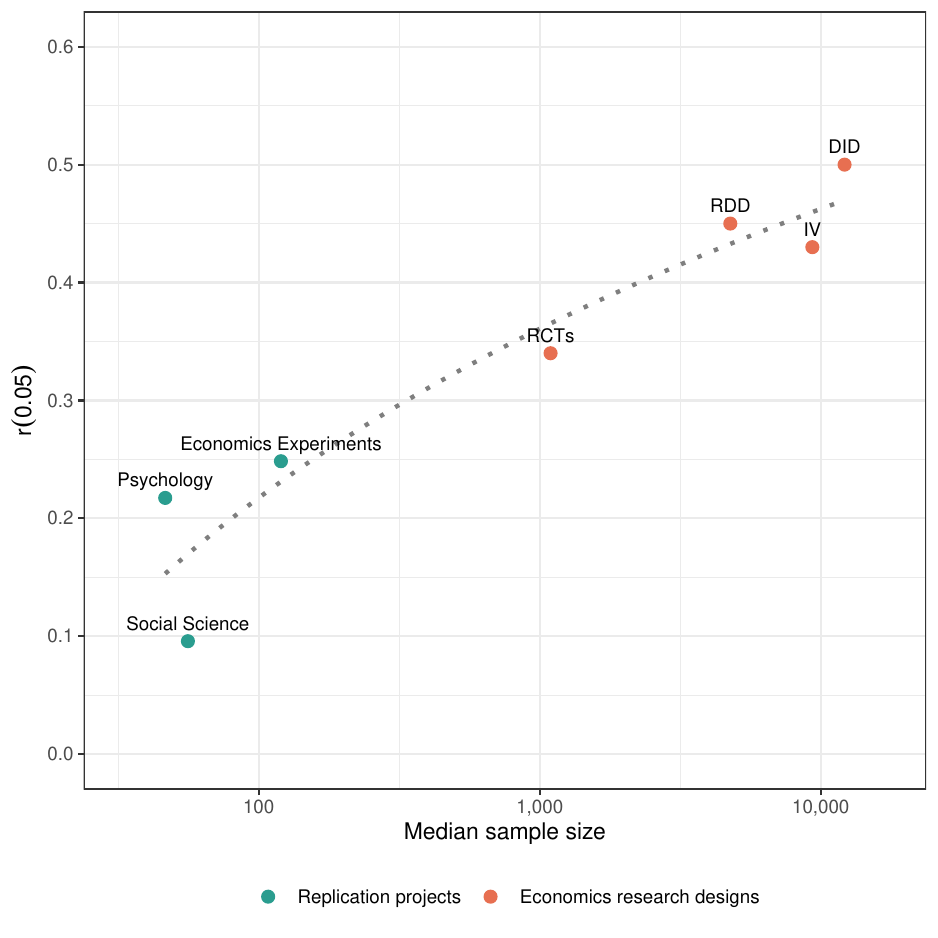}
\caption*{\textit{Notes}: The figure plots the expected replication probability at an original two-sided $p$-value of 0.05 against the median sample size for each application. Replication project data are from \citet{Camerer2016}, \citet{OpenScience2015}, and \citet{Camerer2018}. For RCTs, IV, RDD, and DiD, data are from \citet{Brodeur2020}, with median sample sizes calculated from the nonmissing values. The dotted line plots the OLS fitted values from $r_i(0.05)=\alpha+\beta\log_{10}(\operatorname{Median}N_i)+\varepsilon_i$.}
\end{figure}

Overall, the nonparametric analysis extends the earlier parametric results to a broader range of research designs while allowing for a flexible distribution of true effects. Expected replicability at conventional significance thresholds is slightly higher for observational designs than for experiments, but remains low in absolute terms across all applications.\footnote{The additional analyses in Figure \ref{figure:effective_null_updates} are not feasible with the nonparametric estimator because they require estimation of $\tilde{\pi}$ itself. While the estimator of $r(p)$ converges relatively quickly, the nonparametric estimator of $\tilde{\pi}$ converges slowly and is not guaranteed to yield a valid density.}

\section{Recommendations}\label{section:recommendations}
This section offers three recommendations for researchers and replicators. Recommendations are broadly applicable to replications evaluated using either binary significance criteria or effect sizes.

\subsection{Cumulative Evidence}\label{subsection:cumulative_evidence}
\begin{recommendation}
Treat evidence as cumulative, not binary. A single significant finding provides only suggestive evidence against the null, while evidential strength accumulates through replication. Successful replications can provide strong evidence against the null, whereas insignificant replications do not, by themselves, refute the original finding or imply flaws in the study.
\end{recommendation}


A significant finding provides only partial evidence against the null hypothesis. It does not establish that the null is false. Similarly, a replication does not deliver a definitive conclusion about the original claim, confirming it if ``successful'' and refuting it otherwise. 

For interpreting replication outcomes, the analysis in Subsection~\ref{subsection:interpreting_replication_outcomes} reveals an important asymmetry: a successful replication can substantially strengthen evidence against a near null result, whereas an unsuccessful replication generally provides only limited evidence in its favor. In this way, replications build the cumulative body of evidence required to distinguish persistent empirical relationships from isolated results driven by sampling variation, leading to more reliable scientific inferences. The ultimatum-game literature provides an instructive example, where a large number of replications have established robust empirical regularities \citep{Coffman2015}. Treating scientific evidence as cumulative aligns with arguments developed by \citet{Popper1934} in \textit{The Logic of Scientific Discovery}: ``We do not take even our own observations quite seriously, or accept them as scientific observations, until we have repeated and tested them. Only by such repetitions can we convince ourselves that we are not dealing with a mere isolated `coincidence', but with events which, on account of their regularity and reproducibility, are in principle inter-subjectively testable.''

\subsection{Replication Design}
\begin{recommendation}
    Use the largest feasible replication sample to better distinguish between the null and alternative hypotheses. Treat nominal power targets with caution when power calculations substitute the estimated effect for the true effect: expected replication rates will generally be below the nominal target, even without publication bias or $p$-hacking.
\end{recommendation}

The same-sample-size benchmark used in this paper evaluates replicability under the original research design; it is not intended as a recommendation for replication design. Instead, we recommend that replicators use the largest feasible sample, which maximizes predictive power. This is because larger samples make replication outcomes more informative: significance provides stronger evidence against the null, whereas insignificance provides stronger evidence in its favor.\footnote{Formally, consider the updating factors in equation~\eqref{equation:updating}. Increasing the replication sample size lowers $\left(\frac{r_0(p)}{r(p)}\right)$ following significance and raises $\left(\frac{1-r_0(p)}{1-r(p)}\right)$ following insignificance.}

When evidence informs a concrete decision, the replication sample size should balance the value of more precise evidence against the cost of collecting additional data. In principle, this requires weighing sampling costs against the consequences of adopting an ineffective intervention or rejecting a beneficial one.

Finally, we recommend caution when using the conventional approach of choosing a replication sample size to detect the original effect estimate with, for example, 80\% or 90\% power. Such calculations treat the original estimate as exactly equal to the true effect and ignore its sampling uncertainty. Due to the nonlinearity of the power function, \citet{Vu2024} shows that this approach delivers expected power strictly below the nominal target, even in the absence of $p$-hacking and publication bias.\footnote{The intuition is as follows. Let $T$ denote the nominal power target and $g(\cdot)$ denote the probability of replication. Suppose $\hat{\beta}\sim N(\beta,\sigma^2)$. Because $g(\cdot)$ is locally concave, Jensen's inequality implies that $\mathbb{E}[g(\hat{\beta})]<g(\mathbb{E}[\hat{\beta}])=g(\beta)=T$. Thus, even though the original estimates are unbiased, expected replication power is below the nominal target $T$.} Of course, outcomes from such replications can still be used to rationally update beliefs about the null. The concern is rather that stated replication-rate targets may create the mistaken impression that they are attainable absent research distortions, which is not the case.


\subsection{Encourage and Reward Replications}
\begin{recommendation}
Encourage the publication and citation of replications to establish empirical regularities. Preregistration and pre-analysis plans remain valuable for publishing null results and limiting $p$-hacking, but significant findings can still have low replicability even in the absence of $p$-hacking and publication bias.
\end{recommendation}

Our finding of low replicability at conventional thresholds reinforces recent calls for greater incentives to conduct, publish, and recognize replications \citep{Nosek2012, Coffman2015, Brodeur2023}. As shown in Subsection~\ref{subsection:interpreting_replication_outcomes}, successful replications can generate substantial information about the null hypothesis and hence provide considerable value for subsequent researchers and decision-makers. 

Despite this, replications offer less novelty and professional reward than original studies. To address this imbalance, \citet{Coffman2017} make two proposals. First, leading journals could publish short reports summarizing the replication evidence for influential studies, including replications otherwise buried within broader papers. Second, researchers could cite replications alongside original studies, ensuring that credit and attention reflect the cumulative evidence rather than the first published result alone. Complementing these proposals, the Replication Games of \citet{Brodeur2023} use team-based workshops and joint meta-papers to coordinate replication efforts and reward participants.

Finally, it is important to emphasize that preregistration, pre-analysis plans, and registered reports can limit publication bias and $p$-hacking, but that low replicability can persist even without these distortions. Hence, improving replicability and limiting publication bias are distinct objectives. Three observations underscore this insight. First, the expected replicability for a finding with a given $p$-value is invariant to publication bias and instead determined by the underlying distribution of true effects (Subsection \ref{subsection:replication_power_curve}). Second, Section~\ref{section:nonparametric} finds higher replicability in observational studies than RCTs, despite stronger evidence of $p$-hacking and publication bias in the former \citep{Brodeur2020, Kranz2022}. This occurs due to larger samples in observational studies, which lead to higher replicability. Third, the impact of imposing more stringent publication thresholds would be to mechanically increase replicability, since this favors findings with higher statistical power. In other words, increasing selectivity \textit{improves} replicability. For these reasons, reforms to reduce selective reporting and promote systematic replication should be pursued as complementary objectives.

\section{Conclusion}\label{section:conclusion}
Null hypothesis significance testing has been central to scientific practice -- and the subject of sustained debate about its merits and limitations -- for nearly a century \citep{Fisher1925, NeymanPearson1933, Rozeboom1960, Cohen1994, Wasserstein2019}. Instead of comparing it to proposed alternatives, this paper evaluates the conventional practice of significance testing on its own terms. Using an empirically validated measure of predictive power across fields and research designs, we find that results meeting conventional significance thresholds have relatively low chances of remaining significant in replications of the same sample size. The mechanism is low power in original studies rather than publication bias or $p$-hacking. 

These findings suggest that conventional significance thresholds provide a much weaker guarantee of replicability than is often assumed. A statistically significant result from a single study should instead be treated as suggestive rather than decisive evidence. Robust scientific conclusions require evidence that withstand repeated testing across independent studies.

\bigskip
\bibliographystyle{aer}
{\small
\bibliography{References}
}

\appendix
\renewcommand{\thefigure}{\thesection\arabic{figure}}
\renewcommand{\thetable}{\thesection\arabic{table}}

\section{Proofs}\label{appendix:proofs}
\setcounter{table}{0}
\setcounter{figure}{0}

\textbf{Proof of Theorem~\ref{theorem:predictive_power}}: Let $f_{z \mid |\hat z|,D}(\cdot \mid |\hat z|,1)$ denote the conditional density of $z$ given $|\hat z|$ and publication, $D=1$. Then
\begin{align*}
\Pr(R=1 \mid |\hat z|=x,D=1)
&= \int_{\mathbb R}
\Pr(R=1 \mid |\hat z|=x,z,D=1)\,
f_{z \mid |\hat z|,D}(z \mid x,1)\,dz
\end{align*}

Conditional on $|\hat z|=x$ and $z$, the original test statistic can take either value $\hat z=x$ or $\hat z=-x$. Replication success requires statistical significance in the same direction as the original estimate. Using the Law of Total Probability, 
\begin{align*}
\Pr(R=1 \mid |\hat z|=x,z,D=1)
&=
\Pr(\hat z_r>1.96 \mid z)
\Pr(\hat z=x \mid |\hat z|=x,z,D=1) \\
&\qquad+
\Pr(\hat z_r<-1.96 \mid z)
\Pr(\hat z=-x \mid |\hat z|=x,z,D=1) \\
&=
\frac{
[1-\Phi(1.96-z)]s(x)\varphi(x-z)
+
\Phi(-1.96-z)s(-x)\varphi(x+z)
}{
s(x)\varphi(x-z)+s(-x)\varphi(x+z)
} \\
&=
\frac{
[1-\Phi(1.96-z)]\varphi(x-z)
+
\Phi(-1.96-z)\varphi(x+z)
}{
\varphi(x-z)+\varphi(x+z)
}
\end{align*}

Conditional on $|\hat z|=x$, the original test statistic is either $x$ or $-x$. The second conditional probability in each product gives the probability of each realization conditional on $z$ and publication. The final equality follows from the symmetry of the publication function: $s(x)=s(-x)$.

Next, Bayes' rule implies
\begin{align*}
f_{z \mid |\hat z|,D}(z \mid x,1)
&=
\frac{
f_{D \mid |\hat z|,z}(1 \mid x,z)\,
f_{|\hat z| \mid z}(x \mid z)\,
\pi(z)
}{
\int_{\mathbb R}
f_{D \mid |\hat z|,z'}(1 \mid x,z')\,
f_{|\hat z| \mid z'}(x \mid z')\,
\pi(z')\,dz'
} \\
&=
\frac{
s(x)\left[\varphi(x-z)+\varphi(x+z)\right]\pi(z)
}{
\int_{\mathbb R}
s(x)\left[\varphi(x-z')+\varphi(x+z')\right]\pi(z')\,dz'
} \\
&=
\frac{
\left[\varphi(x-z)+\varphi(x+z)\right]\pi(z)
}{
\int_{\mathbb R}
\left[\varphi(x-z')+\varphi(x+z')\right]\pi(z')\,dz'
}
\end{align*}
\noindent
where $f_{D \mid |\hat z|,z}(1 \mid x,z)=s(x)$ since publication depends on $\hat z$ through $s(\hat z)$ and symmetry implies $s(x)=s(-x)$.

Combining the two above expressions and using $cv(p)\equiv |\hat z|=\Phi^{-1}(1-p/2)$ gives
{\footnotesize
\begin{align*}
\Pr(R=1 \mid p,D=1)
&=
\frac{
\int_{\mathbb R}
[1-\Phi(1.96-z)]\varphi(cv(p)-z)\pi(z)\,dz
+
\int_{\mathbb R}
\Phi(-1.96-z)\varphi(cv(p)+z)\pi(z)\,dz
}{
\int_{\mathbb R}
\left[
\varphi(cv(p)-z)
+
\varphi(cv(p)+z)
\right]
\pi(z)\,dz
}
\end{align*}
}

For the second term in the numerator, let $u=-z$. Then
\begin{align*}
\int_{\mathbb R}
\Phi(-1.96-z)\varphi(cv(p)+z)\pi(z)\,dz
&=
\int_{\mathbb R}
[1-\Phi(1.96-u)]\varphi(cv(p)-u)\pi(-u)\,du
\end{align*}
Similarly,
\begin{align*}
\int_{\mathbb R}
\varphi(cv(p)+z)\pi(z)\,dz
&=
\int_{\mathbb R}
\varphi(cv(p)-u)\pi(-u)\,du
\end{align*}
Recall the symmetrized distribution is defined as $\tilde{\pi}(z)\equiv \frac{\pi(z)+\pi(-z)}{2}$. Relabelling $u$ as $z$, we obtain
\begin{align*}
\Pr(R=1 \mid p,D=1)
&=
\frac{
\int_{\mathbb R}
[1-\Phi(1.96-z)]
\varphi(cv(p)-z)
[\pi(z)+\pi(-z)]\,dz
}{
\int_{\mathbb R}
\varphi(cv(p)-z)
[\pi(z)+\pi(-z)]\,dz
} \\
&=
\frac{
\int_{\mathbb R}
[1-\Phi(1.96-z)]
\varphi(cv(p)-z)
\tilde{\pi}(z)\,dz
}{
\int_{\mathbb R}
\varphi(cv(p)-z)
\tilde{\pi}(z)\,dz
}
\end{align*}
which completes the proof. \qed

\begin{lemma}[Predictive Power Under the Fractional Power Rule]\label{lemma:posterior_power_fpr}
Let $s(\cdot)$ be symmetric about zero and strictly positive. Suppose the replication standard error is chosen according to the fractional power rule with fraction $\psi \in (0,1]$ of the original effect and nominal target power $1-\eta$:
\begin{align*}
\sigma_r = \frac{\psi |\hat z|}{1.96-\Phi^{-1}(\eta)}\sigma
\end{align*}
Then predictive power is
\begin{align}
r(p)
=
\frac{
\int_{\mathbb R}
\left[
1-\Phi\left(
1.96-z\frac{1.96-\Phi^{-1}(\eta)}{\psi cv(p)}
\right)
\right]
\varphi(cv(p)-z)\widetilde{\pi}(z)\,dz
}{
\int_{\mathbb R}
\varphi(cv(p)-z)\widetilde{\pi}(z)\,dz
}
\label{equation:posterior_power_fpr}
\end{align}
where $cv(p)\equiv\Phi^{-1}(1-p/2)$ and $\widetilde{\pi}(z)\equiv \frac{\pi(z)+\pi(-z)}{2}$. The common power rule is the special case $\psi=1$.
\end{lemma}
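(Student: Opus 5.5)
The plan is to rerun the argument used for Theorem~\ref{theorem:predictive_power}, now with the replication statistic drawn under the replication standard error $\sigma_r$ rather than $\sigma$. Because $\sigma_r$ depends on the original estimate through $|\hat z|$, the first step is to fix the conditioning value $|\hat z|=x=cv(p)$. On that event $\sigma_r=\psi x\sigma/(1.96-\Phi^{-1}(\eta))$ is a deterministic multiple of $\sigma$. The replication statistic is then $\hat z_r=\hat\beta_r/\sigma_r$, and conditional on $(z,|\hat z|=x)$ it satisfies $\hat z_r\sim N(z\,\sigma/\sigma_r,1)$. Writing
\[
\kappa(x)\equiv\frac{1.96-\Phi^{-1}(\eta)}{\psi x},
\]
this is $\hat z_r\mid z,x\sim N(z\kappa(x),1)$. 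Here I use that the replication is independent of the original estimate given $\beta$ and that $\sigma_r$ depends only on $x$. Conditioning on $x$ is harmless because the replication draw is independent of $\hat z$ given $z$ apart from this deterministic scale.

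Next I decompose by the sign of the original estimate, as before:
\[
\Pr(R=1\mid x,z,D=1)=\Pr(\hat z_r>1.96\mid z,x)\Pr(\hat z=x\mid\cdot)+\Pr(\hat z_r<-1.96\mid z,x)\Pr(\hat z=-x\mid\cdot).
\]
The two tail probabilities are $1-\Phi(1.96-z\kappa(x))$ and $\Phi(-1.96-z\kappa(x))$. By symmetry and positivity of $s$, the sign probabilities are $\varphi(x\mp z)/[\varphi(x-z)+\varphi(x+z)]$, exactly as in the earlier proof. The posterior density of $z$ given $x$ and $D=1$ is likewise unchanged: it is proportional to $[\varphi(x-z)+\varphi(x+z)]\pi(z)$, and $s(x)$ cancels.

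Combining these pieces, I then substitute $u=-z$ in the terms involving $\varphi(x+z)$. The key identity is
\[
\Phi(-1.96-z\kappa)=1-\Phi(1.96+z\kappa)=1-\Phi(1.96-u\kappa),
\]
so the negative-sign branch becomes the positive-sign branch evaluated against $\pi(-u)$. Adding the two branches yields $\pi(z)+\pi(-z)=2\widetilde\pi(z)$ in both numerator and denominator, and the factor of 2 cancels. Setting $x=cv(p)$ gives \eqref{equation:posterior_power_fpr}. The case $\psi=1$ is immediate.

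The main obstacle is the first step. The replication design depends on the data, so I must check that conditioning on $|\hat z|$ makes $\sigma_r$ nonrandom and leaves the conditional independence of $\hat z_r$ and $\hat z$ given $z$ intact. Once the replication mean is written as $z\kappa(x)$, everything else is a relabeling of the proof of Theorem~\ref{theorem:predictive_power}.
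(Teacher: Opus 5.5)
Your proposal is correct and follows the paper's proof. The paper likewise fixes $|\hat z|=x$, notes that the replication statistic then has mean $z\,(1.96-\Phi^{-1}(\eta))/(\psi x)$, and reuses the argument of Theorem~\ref{theorem:predictive_power}; your version spells out what the paper leaves implicit, namely that $\sigma_r$ depends only on $|\hat z|$ (so the sign decomposition and the posterior for $z$ are unchanged) and that the reflection $u=-z$ still works because $\kappa(x)$ does not depend on $z$.
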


\begin{proof}
Let $x\equiv|\hat z|$. Under the fractional power rule,
\begin{align*}
z_r \equiv \frac{\beta}{\sigma_r}=
\frac{\beta}{\sigma}\frac{\sigma}{\sigma_r} =
z\frac{1.96-\Phi^{-1}(\eta)}{\psi x}
\end{align*}
Hence, conditional on $z$ and $|\hat z|=x$, the replication test statistic, $\hat z_r$ has mean $z\frac{1.96-\Phi^{-1}(\eta)}{\psi x}$.

Using identical arguments to the proof of Theorem~\ref{theorem:predictive_power}, we obtain Equation~\eqref{equation:posterior_power_fpr}. Setting $\psi=1$ gives the common power rule as a special case.
\end{proof}

\begin{lemma}[Existence and Uniqueness of $p$-value Cutoff for Target Power]\label{lemma:uniqueness} Let $s(\cdot)$ be symmetric about zero and strictly positive, and assume
that $\widetilde{\pi}(\cdot)$ is non-degenerate. Define

$\underline r \equiv \lim_{p\to 1}r(p)$ and
$\overline r \equiv \lim_{p\to 0}r(p)$. Then, for every target
$T\in(\underline r,\overline r)$, there exists a unique
$p^*\in(0,1)$ such that $r(p^*)=T$.

Furthermore, if $\widetilde{\pi}(\cdot)$ has unbounded support above,
then $\overline r=1$. Consequently, every target replication probability
$T\in(\underline r,1)$ is attained by a unique $p$-value cutoff. Finally, $\underline r>1-\Phi(1.96)=0.025$.
\end{lemma}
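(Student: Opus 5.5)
The plan is to read Theorem~\ref{theorem:predictive_power} as a posterior expectation. Write $x\equiv cv(p)=\Phi^{-1}(1-p/2)$, which is a continuous, strictly decreasing bijection from $p\in(0,1)$ onto $x\in(0,\infty)$. Let $g(z)\equiv 1-\Phi(1.96-z)=\Phi(z-1.96)$, a bounded, continuous, strictly increasing function. Then $r(p)=E_x[g(z)]$, where $E_x$ is expectation under the ``posterior'' density $q_x(z)\propto\varphi(x-z)\widetilde\pi(z)$. The denominator in \eqref{equation:posterior_replication_power} is finite and strictly positive because $\varphi\le(2\pi)^{-1/2}$ and $\widetilde\pi$ is a density. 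Since $g$ is bounded, dominated convergence shows that $x\mapsto E_x[g]$ is continuous on $[0,\infty)$, and that the same expression also makes sense at $x=0$.

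\emph{Strict monotonicity.} Differentiating under the integral sign (justified by dominated convergence, since $|z-x|\varphi(x-z)$ is bounded) gives
\[
\frac{d}{dx}E_x[g(z)]=\operatorname{Cov}_x\bigl(g(z),z\bigr).
\]
This is the usual exponential-family identity: $\varphi(x-z)\propto e^{xz}e^{-z^2/2}$ up to factors depending only on $x$. Because $g$ and the identity map are both strictly increasing, the covariance is strictly positive whenever $q_x$ is non-degenerate. Non-degeneracy holds because $\widetilde\pi$ is non-degenerate and $\varphi(x-\cdot)>0$ everywhere. One can check this with the representation $\operatorname{Cov}=\tfrac12E[(g(z)-g(z'))(z-z')]$ for independent copies $z,z'$. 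Hence $r$ is strictly decreasing in $p$, and the limits $\underline r$ and $\overline r$ exist. Continuity and strict monotonicity, together with the intermediate value theorem on $(0,1)$, then give existence and uniqueness of $p^*$ for every $T\in(\underline r,\overline r)$.

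\emph{Upper limit under unbounded support.} Fix $\epsilon>0$ and choose $M$ with $g(M)>1-\epsilon$. Because the support is unbounded above, there is a bounded interval $I=[M+1,K]$ with $\int_I\widetilde\pi>0$. For $z\le M$ and $z'\in I$, we have $\varphi(x-z)/\varphi(x-z')=\exp\{(z-z')x+(z'^2-z^2)/2\}$. Controlling the $z^2$ terms needs some care: bound $e^{-(x-z)^2/2}\le e^{-(x-M)^2/2}$ for $x>M$, and $e^{-(x-z')^2/2}\ge e^{-(x-M-1)^2/2}$ for $x>K$. The ratio is then at most $e^{-x+c}\to0$. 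So $P_x(z\le M)\to0$, which gives $\liminf_{x\to\infty}E_x[g]\ge(1-\epsilon)$ and hence $\overline r=1$. This step is the most delicate one and the place I expect the main obstacle, because the bound must hold uniformly over $z\le M$ and is pure tail bookkeeping. It does not need $\widetilde\pi$ to be bounded.

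\emph{Lower bound.} By continuity at $x=0$, $\underline r=E_0[g(z)]$, where $q_0\propto\varphi(z)\widetilde\pi(z)$ is symmetric about $0$. Symmetry gives $\underline r=E_0\bigl[\tfrac12(g(z)+g(-z))\bigr]$. Set $h(z)\equiv g(z)+g(-z)$. It is even and satisfies $h'(z)=\varphi(z-1.96)-\varphi(z+1.96)>0$ for $z>0$, so $h(z)>h(0)=2(1-\Phi(1.96))$ for every $z\ne0$. Since $q_0$ is non-degenerate, it is not a point mass at $0$; being a density it actually puts zero mass at $0$. Therefore $\underline r>1-\Phi(1.96)=0.025$.
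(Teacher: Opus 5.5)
Your proof is correct. It has the same skeleton as the paper's: continuity in $x=cv(p)$ by dominated convergence, strict monotonicity, the intermediate value theorem, a tail argument for $\overline r$, and folding the symmetric posterior at $x=0$ for $\underline r$. Your even function $h$ is exactly the paper's integrand restricted to $[0,\infty)$.

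The genuine differences are in two steps.

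\emph{Monotonicity.} The paper argues globally. Since $q_{x_2}(z)/q_{x_1}(z)\propto e^{(x_2-x_1)z}$ is strictly increasing in $z$, $q_{x_2}$ strictly first-order stochastically dominates $q_{x_1}$, so the expectation of the increasing function $1-\Phi(1.96-z)$ rises. You argue infinitesimally, through $\frac{d}{dx}E_x[g]=\operatorname{Cov}_x(g(z),z)$ and the independent-copies identity. The likelihood-ratio route needs neither differentiation under the integral sign nor a finite first moment of $z$ under $q_x$. Yours gives strictness in one line and an explicit formula for the slope of the curve in $x$.

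\emph{The limit $\overline r=1$.} The paper only asserts that $q_x$ ``shifts arbitrarily far to the right,'' so that $Z\to\infty$ in probability under $q_x$. Your comparison of $(-\infty,M]$ with $[M+1,K]$ gives, for $x>K$,
$P_x(z\le M)\le e^{-(x-M)+1/2}\int_{-\infty}^{M}\widetilde\pi\,\big/\int_{M+1}^{K}\widetilde\pi$.
This supplies exactly the justification the paper's sketch omits, so on this step your argument is the more complete one.
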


\begin{proof}
Let $x\equiv cv(p)=\Phi^{-1}(1-p/2)$ and define
\[
g(x)\equiv
\frac{
\overbrace{
\int_{\mathbb R}
\left[1-\Phi(1.96-z)\right]\varphi(x-z)\widetilde{\pi}(z)\,dz
}^{\equiv N(x)}
}{
\underbrace{
\int_{\mathbb R}
\varphi(x-z)\widetilde{\pi}(z)\,dz
}_{\equiv D(x)}
}
\]

By Theorem~\ref{theorem:predictive_power}, $r(p)=g(cv(p))$.

Continuity of $g(x)$ follows from the dominated convergence theorem.
In particular, the integrands defining $N(x)$ and $D(x)$ are continuous
in $x$ and bounded in absolute value by a constant multiple of
$\widetilde{\pi}(z)$. Hence, $N(x)$ and $D(x)$ are continuous, and since
$D(x)>0$, so is $g(x)$.

Next, define
\[
q_x(z)
\equiv
\frac{\varphi(x-z)\widetilde{\pi}(z)}{D(x)}
\]
For any $x_2>x_1$, the likelihood ratio is
\begin{align*}
\frac{q_{x_2}(z)}{q_{x_1}(z)}
&=
\frac{D(x_1)}{D(x_2)}
\frac{\varphi(x_2-z)}{\varphi(x_1-z)} =
\frac{D(x_1)}{D(x_2)}
\exp\left\{
(x_2-x_1)z
-\frac{x_2^2-x_1^2}{2}
\right\}
\end{align*}
This likelihood ratio is strictly increasing in $z$ on the support of
$\widetilde{\pi}(\cdot)$. Hence, since $\widetilde{\pi}(\cdot)$ is
non-degenerate, the distribution $q_{x_2}$ strictly first-order
stochastically dominates $q_{x_1}$. Since $1-\Phi(1.96-z)$ is strictly
increasing in $z$,
\begin{align*}
g(x_2)
&=
\mathbb E_{q_{x_2}}\left[1-\Phi(1.96-Z)\right] >
\mathbb E_{q_{x_1}}\left[1-\Phi(1.96-Z)\right]
=
g(x_1)
\end{align*}
Thus, $g(x)$ is strictly increasing in $x$.

Since $cv(p)=\Phi^{-1}(1-p/2)$ is continuous and strictly decreasing
in $p\in(0,1)$, $r(p)=g(cv(p))$ is continuous and strictly decreasing
in $p$. Its image over $p\in(0,1)$ is therefore
$(\underline r,\overline r)$. For any
$T\in(\underline r,\overline r)$, the intermediate value theorem implies that there exists a $p^*\in(0,1)$ such that $r(p^*)=T$. Strict monotonicity implies $p^*$ is unique.

It remains to establish the endpoint results. Since $q_x(z)\propto \exp\left\{xz-\frac{z^2}{2}\right\}\widetilde{\pi}(z)$, unbounded support of $\widetilde{\pi}(\cdot)$ above implies that $q_x$ shifts arbitrarily far to the right as $x\to\infty$. Hence
$Z\to\infty$ in probability under $q_x$ as $x\to\infty$. Since $1-\Phi(1.96-z)\to1$ as $z\to\infty$ and is bounded, it follows that $g(x)\to1$. Therefore, $\overline r=1$.

Finally, since $cv(p)\to0$ as $p\to1$,
\[
\underline r=g(0)
=
\frac{
\int_{\mathbb R}
[1-\Phi(1.96-z)]\varphi(z)\widetilde{\pi}(z)\,dz
}{
\int_{\mathbb R}
\varphi(z)\widetilde{\pi}(z)\,dz
}
\]
Using the symmetry of $\varphi(z)$ and $\widetilde{\pi}(z)$,
\[
\underline r
=
\frac{
\int_0^\infty
\left\{
[1-\Phi(1.96-z)]
+
[1-\Phi(1.96+z)]
\right\}
\varphi(z)\widetilde{\pi}(z)\,dz
}{
2\int_0^\infty
\varphi(z)\widetilde{\pi}(z)\,dz
}
\]
For every $z\geq0$,
\[
\frac{
[1-\Phi(1.96-z)]
+
[1-\Phi(1.96+z)]
}{2}
\geq
1-\Phi(1.96)
\]
since the left-hand side equals $1-\Phi(1.96)$ at $z=0$ and has derivative
$\frac{\varphi(1.96-z)-\varphi(1.96+z)}{2}>0$ for $z>0$. Since
$\widetilde{\pi}(\cdot)$ is non-degenerate, the inequality is strict. Hence $\underline r>1-\Phi(1.96)=0.025$. This completes the proof.
\end{proof}

\section{Nonparametric Results and Proofs}\label{app:nonparametrics}

In this section we provide general results for nonparametric estimation of replication probabilities. First we formalize and define some key quantities. 

 Let $\hat{z}=z+V$ where $V\sim N(0,1)$ and $z$ has unknown density $\pi$. Consider a replication where sample sizes are multiplied by $\gamma^2$. So $\hat{z}_r=\gamma z+V_r$ where $V_r\sim N(0,1)$.  The random variables $(V,V_r,Z)$ are mutually independent.

  The meta-analyst wishes to estimate the replication probability $r(p,\gamma)$:
 \begin{align*}
        r(p,\gamma) &\equiv \mathbb{P}\left[|\hat z_r|>c \text{ and sign}(\hat z_r)=\text{sign}(\hat z) \: \mid \: |\hat z|=cv(p),D=1\right] \\
        &= \frac{\int_{\mathbb{R}} [1-\Phi(c-\gamma z)] \varphi(cv(p) -z )\tilde{\pi}(z)dz }{\int_{\mathbb{R}} \varphi(cv(p) -z' )\tilde{\pi}(z')dz'}
 \end{align*}
 \noindent where $\varphi$ is the normal density, $\Phi$ is the normal CDF, $\tilde{\pi}(z)\equiv \frac{\pi(z)+\pi(-z)}{2}$ is the symmetrized density, and $cv(p)$ is the two-sided critical value associated with $p$-value $p$.

The meta-analyst observes a random sample of $|\hat{z}|$---i.e. they observe only absolute values. Moreover, they only observe published $|\hat z|$. Let $D=1$ indicate that a finding is published, and suppose its publication probability depends on the observed $t$-ratio through the even selection function $s_{\theta_0}$:
\begin{equation}\label{eq:w_definition}
    \Pr(D=1\mid \hat z )=s_{\theta_0}(\hat z)
\end{equation}
\noindent where the functional form of $s_\theta$ is known but the parameter vector $\theta_0$ is unknown.

\subsection{Identification}

Theorem \ref{thm:identification_nonparm} shows that whenever the publication bias parameter $\theta_0$ is identified then the symmetrized latent density $\tilde \pi$ is identified and therefore the target parameter $r(p,\gamma)$ is also identified. Moreover, the fact that only the absolute values of test statistics are observed by the meta-analyst means that $\pi$ itself is unidentified. For step functions, the identification of $\theta_0$ comes immediately from caliper ratios.
 \begin{theorem}\label{thm:identification_nonparm}
     Assume that the density $\pi$ exists, that publication bias follows Equation (\ref{eq:w_definition}), that the function family $s_\theta(z)$ is known, $\inf_{\theta \in \Theta,z\in \mathbb{R}} s_{\theta}(z)>0$, $s_{\theta}(z)$ is even in $z$, and that $\theta_0$ is identified from the distribution of published $|\hat z|$. Then:
    \begin{itemize}
      \item  $\pi(z)$ is not identified from the distribution of published $|\hat z|$.
        \item   $\tilde{\pi}(z)\equiv \frac{\pi(z)+\pi(-z)}{2}$ is identified  from the distribution of published $|\hat z|$. 
    \end{itemize}
 \end{theorem}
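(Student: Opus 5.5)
We will compute the density of published $|\hat z|$ explicitly and read both claims off the resulting formula. Since $\hat z = z+V$ with $V\sim N(0,1)$ independent of $z$, the unselected density of $\hat z$ is $f(t)=\int \varphi(t-z)\pi(z)\,dz$. The unselected density of $|\hat z|$ at $x\ge 0$ is therefore
\[
h(x)=f(x)+f(-x)=\int [\varphi(x-z)+\varphi(x+z)]\pi(z)\,dz = 2\int \varphi(x-z)\widetilde{\pi}(z)\,dz ,
\]
where the last equality uses the change of variables $u=-z$ in the second term, exactly as in the proof of Theorem~\ref{theorem:predictive_power}.

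Because $s_{\theta_0}$ is even, it is a function of $|\hat z|$ alone. The density of published $|\hat z|$ is then
\[
g(x)=\frac{s_{\theta_0}(x)\,h(x)}{\int_0^\infty s_{\theta_0}(y)h(y)\,dy}, \qquad x\ge 0 .
\]
This formula depends on $\pi$ only through $\widetilde{\pi}$.

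\textbf{Non-identification of $\pi$.} Take any $\pi$ that is not symmetric and compare it with its reflection $\pi^{-}(z)\equiv\pi(-z)$. The two densities differ but have the same symmetrization. Hence, for the same $\theta_0$, they generate the same $g$, so $\pi$ is not identified. If one wants a concrete example, a shifted normal $N(\mu,1)$ with $\mu\neq 0$ works.

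\textbf{Identification of $\widetilde{\pi}$.} By assumption $\theta_0$ is identified from the distribution of published $|\hat z|$, so $s_{\theta_0}$ is known. Since $\inf s>0$, we can invert the selection and recover $h(x)\propto g(x)/s_{\theta_0}(x)$ on $[0,\infty)$. The normalizing constant is fixed by $\int_0^\infty h=1$, which is legitimate because $h$ is a probability density. Extending evenly, $\tfrac12 h(|t|)$ equals the density of the symmetric random variable $\widetilde Z+V$, where $\widetilde Z\sim\widetilde{\pi}$ (the extended function is $\int \varphi(t-z)\widetilde{\pi}(z)\,dz$, even in $t$, integrating to $1$); call this the convolution $\varphi*\widetilde{\pi}$ on all of $\mathbb R$.

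The remaining step is deconvolution. The characteristic function of $\varphi*\widetilde{\pi}$ equals $e^{-\omega^2/2}\,\widehat{\widetilde{\pi}}(\omega)$. Since $e^{-\omega^2/2}$ never vanishes, $\widehat{\widetilde{\pi}}$ is determined, and so is $\widetilde{\pi}$ as a distribution. Because $\pi$ is assumed to be a density, $\widetilde{\pi}$ is determined almost everywhere. The identification of $r(p,\gamma)$ then follows immediately, since it is a functional of $\widetilde{\pi}$ alone.

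\textbf{Main obstacle.} The delicate point is the deconvolution step, specifically making sure the Fourier argument is rigorous. We will handle this by working with probability measures: the characteristic function uniquely determines a law, and the convolution theorem holds for convolutions of probability measures. This avoids any integrability issues with $\widetilde{\pi}$ beyond its being a density.
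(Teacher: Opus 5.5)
Your proposal is correct and follows essentially the same route as the paper: undo selection by reweighting with $1/s_{\theta_0}$ (legitimate since $\inf s>0$), symmetrize to obtain the law of $\widetilde Z+V$, and deconvolve using the nonvanishing Gaussian characteristic function $e^{-\omega^2/2}$. The only cosmetic difference is in the non-identification step, where you compare an asymmetric $\pi$ with its reflection $\pi(-z)$ while the paper compares a positive $z$ with its Rademacher sign-randomization $Sz$; both exhibit two distinct densities with the same symmetrization.
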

 \begin{proof}
     {\bf Non-Identification of ${\pi}$:} Let $S$ be a Rademacher random variable independent of $z$. Consider two variables $z,z'$. $z$ is positive almost surely and $z'\equiv Sz$. So $z,z'$ have different distributions but yield the same distribution for $|\hat z|$. So  ${\pi}$ is not identified from the distribution of $|\hat z|$.

{\bf Identification of $f_{|\hat z|}$:} Let $ f_{|\hat z|}$ denote the PDF of $|\hat z|$. The researcher observes a random sample of published $|\hat z|$. So the conditional density $f_{|\hat z| \mid D=1}$  is identified. By Bayes' Rule: $f_{|\hat{Z}|}(x) = \frac{f_{|\hat{Z}|\mid D=1}(x)\mathbb{E}[s_{\theta_0}(\hat z)]}{s_{\theta_0}(x)}$. By Bayes' rule: $\mathbb{E}[s_{\theta_0}(\hat{z})] = 1/\mathbb{E}[1/s_{\theta_0}(\hat{z})|D=1]$. So $f_{|\hat{z}|}(x) = \frac{f_{|\hat{z}|\:|D=1}(x)}{s_{\theta_0}(x)\mathbb{E}[1/s_{\theta_0}(\hat{z})|D=1]}$. By assumption the function family $s_\theta$ is known and even in $z$. So if $\theta_0$ is identified then the unconditional density $f_{|\hat{z}|}$ is also identified.

{\bf Identification of $\tilde f_{\hat z}$:} Define $\tilde f_{\hat z}(x) \equiv  \frac{f_{\hat z}(x)+f_{\hat z}(-x)}{2}$. This is the symmetrized version of the PDF of $\hat z$. Since $\tilde f_{\hat z}(x) \equiv  f_{|\hat z|}(|x|)/2$ and we already showed that $f_{|\hat z|}$ is identified, then $\tilde f_{\hat z}$ is also identified. 
      
{\bf Identification of $\tilde{\pi}$:} Let  $S$ be  a Rademacher random variable independent of $(z,V,V_r,\hat z)$. Then $S|\hat z|$ and $S \hat z$ both share the same PDF $ \tilde f_{\hat z}$. Since we already showed that $ \tilde f_{\hat z}$ is identified, the distribution of $S \hat z$ is identified. Notice that $S\hat z = Sz+SV\overset{d}{=} Sz+V$. Since its density is identified, the characteristic function $\phi_{S\hat{z}}(\zeta)$ is identified. By the deconvolution theorem,  $\phi_{Sz}(\zeta)\exp(-\zeta^2/2) = \phi_{S\widehat{z}}(\zeta) $ where $\phi_{Sz}$ is the characteristic function of $Sz$ and $\phi_{S\hat{Z}}$ is the characteristic function of $S\hat{z}$. Since $\exp(-\zeta^2/2)>0$ for all $\zeta$, the distribution of $Sz$ is identified. Since $Sz$ has PDF $\tilde \pi$, so $\tilde \pi$ is identified.
 
 \end{proof}

Since the replication probability $r(p,\gamma)$ is simply a known integral over $\tilde \pi$ we have the corollary:
 \begin{corollary}
   Let the assumptions of Theorem \ref{thm:identification_nonparm} hold. Then, for all $\gamma>0,p\in (0,1)$, the replication probability  $r(p,\gamma)$ is identified  from the distribution of published $|\hat z|$.
 \end{corollary}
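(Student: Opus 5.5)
The corollary follows from Theorem~\ref{thm:identification_nonparm} together with the integral representation of $r(p,\gamma)$ displayed at the start of this appendix. The plan is to show that $r(p,\gamma)$ is a fixed, known functional of $\tilde\pi$ that is well defined for every $\gamma>0$ and $p\in(0,1)$. Once that is in place, identification of $\tilde\pi$ transfers directly to $r(p,\gamma)$.

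\textbf{Step 1: the representation of $r(p,\gamma)$.} First I would verify that the displayed formula for $r(p,\gamma)$ holds for general $\gamma$, since Theorem~\ref{theorem:predictive_power} covers only $\gamma=1$. I would redo that proof with $\hat z_r=\gamma z+V_r$ and critical value $c$.

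\begin{itemize}
\item Conditional on $z$, the replication succeeds in the positive direction with probability $1-\Phi(c-\gamma z)$ and in the negative direction with probability $\Phi(-c-\gamma z)$.
\item Publication enters only through the even function $s_{\theta_0}(|\hat z|)$, so it cancels from numerator and denominator exactly as before.
\item The substitution $u=-z$ in the negative-direction term uses $\Phi(-c+\gamma u)=1-\Phi(c-\gamma u)$. This folds $\pi$ into the symmetrization $\tilde\pi$.
\end{itemize}

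\textbf{Step 2: the ratio is well defined.} The denominator is $\int\varphi(cv(p)-z)\tilde\pi(z)\,dz$. It is strictly positive because $\varphi>0$ everywhere and $\tilde\pi$ is a density. It is also finite, since $\varphi$ is bounded. The numerator integrand is bounded by $\varphi(cv(p)-z)\tilde\pi(z)$, so the numerator is finite. Hence $r(p,\gamma)=G_{p,\gamma}(\tilde\pi)$ for a map $G_{p,\gamma}$ whose form is fully known: $\varphi$, $\Phi$, $c$, $\gamma$ and $cv(p)$ are all known constants or functions.

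\textbf{Step 3: conclude.} Theorem~\ref{thm:identification_nonparm} says that $\tilde\pi$ is uniquely determined by the distribution of published $|\hat z|$. Therefore $G_{p,\gamma}(\tilde\pi)$ is uniquely determined as well. Put differently, two data-generating processes with the same distribution of published $|\hat z|$ share the same $\tilde\pi$, and hence the same $r(p,\gamma)$. This holds even though, by the first part of Theorem~\ref{thm:identification_nonparm}, they may have different $\pi$.

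\textbf{Main obstacle.} The only delicate point is Step 1. Identification must go through $\tilde\pi$, not $\pi$, so the symmetrization argument has to work for general $\gamma$. The key check is that the sign-matching criterion maps the negative-branch term onto the positive-branch term under $z\mapsto -z$. With that confirmed, the rest is immediate.
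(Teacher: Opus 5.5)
Your proposal is correct and follows the paper's argument: the paper's entire proof is the observation that $r(p,\gamma)$ is a known integral functional of $\tilde\pi$, and $\tilde\pi$ is identified by Theorem~\ref{thm:identification_nonparm}. Your Steps 1 and 2 add detail the paper takes for granted, namely the symmetrized representation for general $\gamma$ and the positivity of the denominator; the paper simply asserts that representation in its definition of $r(p,\gamma)$, by analogy with the proof of Theorem~\ref{theorem:predictive_power}.
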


\subsection{Estimation}

Define $\mu_j \equiv 2^{-j/2}b_j$. For any $\gamma >0$, the meta-analyst uses an estimator of the same form as before:
\begin{align*}
   \widehat{r}^{(np)}(p,\gamma) &\equiv 1 - \frac{ \sum_{j=0}^{J_n} a_j2^{j/2}  \widehat{\mu}_j  }{  \sum_{j=0}^{J_n} d_j2^{j/2}  \widehat{\mu}_j }\\
    \widehat{\mu}_j
    &\equiv
    \frac{1}{n}\sum_{i=1}^n
    \frac{
         \psi_j\left(cv(p)-|\hat z_i|\right)
        \varphi\left(cv(p)-|\hat z_i|\right) +\psi_j\left(cv(p)+|\hat z_i|\right)
        \varphi\left(cv(p)+|\hat z_i|\right) 
    }{
        s_{\widehat{\theta}_n}\left(\hat z_i\right)
    }
\end{align*}

But, the known deterministic scalar weights $a_j,d_j$ are now the known deterministic constants:
\begin{align*}
   a_j &\equiv  \int_{\mathbb{R}}  \varphi(u)\exp(-u^2/2)\Phi(c-\gamma cv(p)+\sqrt{2}\gamma  u  )\psi_j(u)du \\
    d_j &\equiv \int_{\mathbb{R}}  \varphi(u)\exp(-u^2/2)\psi_j(u)du  
\end{align*}

 Theorem \ref{thm:nonparametric_consistency} finds the rate of consistency of  $\widehat{r}^{(np)}(p,\gamma)$ and shows that this rate slows down as $\gamma$ gets larger.

\begin{theorem}\label{thm:nonparametric_consistency}
   Let the sample $|\hat z_1|,\ldots,|\hat z_n|$ satisfy Assumption \ref{assum:sample}. Assume  that the density $\pi$ exists and has bounded height and that publication bias follows Equation (\ref{eq:w_definition}). Assume that the researcher  has an estimator $\widehat{\theta}_n$ of the publication bias parameters $\theta_0$ such that for some constant $q\in(0,\frac{1}{2}]$:
    $$ \sup_{t\in \mathbb{R}} \left|\frac{1}{s_{\widehat{\theta}_n}(t)} -\frac{1}{s_{{\theta}_0}(t)} \right| = \mathcal{O}_p\left(n^{-q}\right)$$

Then for fixed $p,c,\gamma$ if the researcher sets $J_n = \lceil \frac{2q}{\log 2} \log (A n)\rceil $ for any $A>0$, then:
    $$  \widehat{r}^{(np)}(p,\gamma)  - r(p,\gamma) = \mathcal{O}_p\left(n^{q\log_2\left(\frac{{1+2\gamma^2}}{{2+2\gamma^2}}\right)}(\log n )^{3/4}\right)  $$

If instead $\theta_0$ is known, then:
$$  \widehat{r}^{(np)}(p,\gamma)  - r(p,\gamma) = \mathcal{O}_p\left(n^{\frac{1}{2}\log_2\left(\frac{{1+2\gamma^2}}{{2+2\gamma^2}}\right)}(\log n )^{3/4} \right)  $$
   
\end{theorem}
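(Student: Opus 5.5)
The plan is to treat $\widehat r^{(np)}(p,\gamma)=1-\widehat N/\widehat D$ as a smooth function of two linear functionals of $\tilde\pi$. First I would bound $\widehat N-N$ and $\widehat D-D$, where $N\equiv\int\Phi(c-\gamma z)\varphi(cv(p)-z)\tilde\pi(z)\,dz$ and $D\equiv\int\varphi(cv(p)-z)\tilde\pi(z)\,dz>0$. Then I would conclude with the elementary bound
\[
\left|\frac{\widehat N}{\widehat D}-\frac{N}{D}\right|\le \frac{|\widehat N-N|+|\widehat D-D|\,|N/D|}{|\widehat D|},
\]
using that $\widehat D\to D>0$ in probability. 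Note that $1-r=N/D$ because $1-[1-\Phi(c-\gamma z)]=\Phi(c-\gamma z)$.

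\textbf{Step 1: population identity.} Let $x=cv(p)$ and $\mu_j$ be the population analogue of $\widehat\mu_j$, i.e.\ with $s_{\theta_0}$ in place of $s_{\widehat\theta_n}$ and expectation over the published distribution. By the Bayes step in the proof of Theorem~\ref{thm:identification_nonparm}, inverse-probability weighting turns $\mu_j$ into (up to the constant $1/\mathbb E[1/s_{\theta_0}\mid D=1]$, which cancels in the ratio) an expectation under the \emph{unselected} symmetrized law of $\hat z$. This equals the expectation of $\psi_j(x-S\hat z)\varphi(x-S\hat z)$ with $S\hat z\overset d=Sz+V$.

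Integrating out $V$ uses the Gaussian convolution identity for Hermite functions, which follows from the generating function $\sum_j \He{j}{u}t^j/j!=e^{tu-t^2/2}$. The convolution $(\He{j}{\cdot}\varphi)*\varphi$ is a rescaled Hermite function evaluated at $(x-z)/\sqrt2$ with factor $2^{-j/2}$. This yields $\mu_j=2^{-j/2}b_j$ with $b_j$ the coefficient of $\tilde\pi$ in \eqref{eq:pi_series}.

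Then, plugging the series into $N$ and $D$, substituting $z=x-\sqrt2 u$ and using orthogonality gives
\[
N=\sum_{j\ge0}a_j2^{j/2}\mu_j \quad\text{and}\quad D=\sum_{j\ge0}d_j2^{j/2}\mu_j .
\]
I would verify that this reproduces the stated $a_j,d_j$ exactly, including the argument $c-\gamma cv(p)+\sqrt2\gamma u$.

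\textbf{Step 2: decomposition.} I would write $\widehat N-N$ as the sum of three terms:
\[
\underbrace{-\sum_{j>J_n}a_j2^{j/2}\mu_j}_{\text{truncation}}
\;+\;\underbrace{\sum_{j\le J_n}a_j2^{j/2}(\widetilde\mu_j-\mu_j)}_{\text{sampling}}
\;+\;\underbrace{\sum_{j\le J_n}a_j2^{j/2}(\widehat\mu_j-\widetilde\mu_j)}_{\text{selection}},
\]
where $\widetilde\mu_j$ uses the true weights $1/s_{\theta_0}$. The same decomposition applies to $D$ with $d_j$ in place of $a_j$.

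For the individual pieces:
\begin{itemize}
\item \emph{Size of $\mu_j$.} Bounded height of $\pi$ together with Cramér's inequality $\sup_u|\psi_j(u)|\varphi(u)^{1/2}\lesssim j^{-1/12}$ gives $|\mu_j|\lesssim 2^{-j/2}$ up to polynomial factors.
\item \emph{Sampling term.} The summand $\psi_j\varphi/s_{\theta_0}$ is uniformly bounded, since $\inf s>0$ and Cramér's bound applies. With at most $L$ dependent observations per article and independence across articles (Assumption~\ref{assum:sample}), $\operatorname{Var}(\widetilde\mu_j)\le CL/n$, so $|\widetilde\mu_j-\mu_j|=\mathcal O_p(n^{-1/2})$ uniformly in $j$.
\item \emph{Selection term.} The same uniform bound gives $|\widehat\mu_j-\widetilde\mu_j|\lesssim \sup_t|1/s_{\widehat\theta_n}-1/s_{\theta_0}|=\mathcal O_p(n^{-q})$.
\end{itemize}
Hence the stochastic terms are $\mathcal O_p(n^{-q})\sum_{j\le J_n}|a_j|2^{j/2}$, while the truncation term is of order $\sum_{j>J_n}|a_j|$ up to polynomial factors.

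\textbf{Step 3 (main obstacle): geometric decay of $a_j$ and $d_j$.} This is the step I expect to be hardest, because it is where the exponent $\log_2\frac{1+2\gamma^2}{2+2\gamma^2}$ must come from. For $d_j$ the integral is an exact Gaussian moment computable from the generating function. For $a_j$, the plan is to differentiate in the shift $\alpha\equiv c-\gamma cv(p)$. This replaces $\Phi(\alpha+\sqrt2\gamma u)$ by $\sqrt2\gamma\,\varphi(\alpha+\sqrt2\gamma u)$, so $\partial_\alpha a_j$ becomes the integral of $\He{j}{u}$ against a single Gaussian kernel. The generating function then evaluates it in closed form as a Gaussian in $\alpha$ times $\rho^{j/2}$ times a Hermite polynomial in $\alpha$, where $\rho$ is a variance ratio depending on $\gamma$. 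After that I would integrate back in $\alpha$ from $-\infty$, where $a_j\to0$, and bound the polynomial factor using Cramér's inequality again.

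The target is $|a_j|\lesssim j^{\kappa}\big(\tfrac{1+2\gamma^2}{2+2\gamma^2}\big)^{j/2}2^{-j/2}$ up to the exact bookkeeping of the powers of $2$. If the constant in this geometric rate comes out wrong, I would recheck the scaling convention linking $\psi_j$ to $\chi_j$, since a factor $\sqrt2$ there shifts the rate.

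\textbf{Step 4: balancing.} With the decay from Step~3, the stochastic sum is dominated by its last term, of order $n^{-q}\rho_\gamma^{J_n}$ times polynomial factors. The truncation tail is of the same order, since it is comparable to its first term. The choice $J_n=\lceil\frac{2q}{\log2}\log(An)\rceil$ makes $2^{J_n/2}\asymp n^{q}$, so both terms equal $n^{q\log_2((1+2\gamma^2)/(2+2\gamma^2))}$ times polynomial-in-$J_n$ factors. The $j^{\kappa}$ and Cramér factors summed up to $J_n\asymp\log n$ produce the $(\log n)^{3/4}$. Finally, the known-$\theta_0$ case is the same argument with the selection term removed and $q=1/2$.
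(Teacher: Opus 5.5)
Your architecture matches the paper's proof. This covers the convolution identity $\mu_j=2^{-j/2}b_j$ (the paper cites the Carrasco--Florens SVD for it), the split into truncation, sampling and selection errors, the ratio step, and the balancing $2^{J_n/2}\asymp n^q$.

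\textbf{Where you differ: the coefficient bound in Step~3.} The paper uses the entire generating function $\sum_j a_j t^j/\sqrt{j!}=2^{-1/2}e^{-t^2/4}\Phi\bigl((c-\gamma cv(p))/\sqrt{1+\gamma^2}+\gamma t/\sqrt{2(1+\gamma^2)}\bigr)$. It bounds $\Phi$ on complex circles via erf and applies Cauchy's estimate with $R\propto\sqrt j$ and Stirling. This gives $|a_j|\lesssim j^{3/4}\lambda^j$ with $\lambda^2\equiv\frac{1+2\gamma^2}{2+2\gamma^2}$, and that $j^{3/4}$ is the source of the $(\log n)^{3/4}$.

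Your real-variable route works, avoids complex analysis, and is sharper. Completing the square gives exactly $\partial_\alpha a_j=C_\gamma e^{-\alpha^2/(2(1+\gamma^2))}\lambda^j\psi_j(-k\alpha)$ with $k^2=\gamma^2/((1+\gamma^2)(1+2\gamma^2))$. Applying Cram\'er's $|\psi_j(x)|\le Ke^{x^2/4}$ leaves a net exponent $-(2+3\gamma^2)\alpha^2/(4(1+\gamma^2)(1+2\gamma^2))<0$. Integrating from $-\infty$, where $a_j\to0$ by dominated convergence, then gives $|a_j|\le C\lambda^j$ uniformly in $j$, with no polynomial factor. Your argument therefore yields the rate $n^{q\log_2\lambda^2}$ without the $(\log n)^{3/4}$, since the $d_j$-terms (of order $n^{-q}J_n$ and $2^{-J_n/2}$) are dominated. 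The log in your Step~4 is slack in the stated result, not something your $j^\kappa$ factors produce.

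\textbf{Two bookkeeping fixes.} First, your Step~3 target $j^\kappa\lambda^j2^{-j/2}$ is wrong.
\begin{itemize}
\item The extra $2^{-j/2}$ appears only in the limit $\alpha\to+\infty$. There $\int_{\mathbb R}\partial_\alpha a_j\,d\alpha=d_j=O(2^{-j/2})$ by cancellation, and a half-line integral has no such cancellation.
\item Sanity check: at $\gamma=0$ one has $a_j=\Phi(\alpha)d_j$ exactly. This is of order $2^{-j/2}j^{-1/4}$ for even $j$, not $2^{-j}$.
\item With that factor, $\sum_{j\le J_n}|a_j|2^{j/2}$ would stay bounded and you would effectively be claiming an $O_p(n^{-q})$ rate. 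That contradicts your own Step~4.
\item Step~4's bookkeeping is the correct one. It is the growth $|a_j|2^{j/2}\lesssim(\sqrt2\lambda)^j$ with $\sqrt2\lambda>1$ that makes the stochastic term $n^{-q}(\sqrt2\lambda)^{J_n}$ match the truncation term $\lambda^{J_n}$.
\end{itemize}
Second, $\partial_\alpha\Phi(\alpha+\sqrt2\gamma u)=\varphi(\alpha+\sqrt2\gamma u)$, without a $\sqrt2\gamma$ factor. This is harmless.
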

\begin{proof}
    
Define the inner products:
\begin{align*}
     \langle g_1, g_2 \rangle_{T} &\equiv  \int_{-\infty}^\infty g_1(x)g_2(x)\varphi(x)dx \\
     \langle g_1,g_2 \rangle_{Z} &\equiv  \int_{-\infty}^\infty g_1(x)g_2(x)\varphi(x/\sqrt{2})/\sqrt{2}dx 
\end{align*}

Recall that if $S$ is an independent Rademacher, then $S|\hat z|  \overset{d}{=} S\hat z  \overset{d}{=} Sz +V $ where $V \sim N(0,1)$ and $V \independent Sz$. $Sz$ has PDF $\tilde \pi$ and $S|\hat z|,S\hat z $ share the PDF $\tilde f_{\hat z}$ which is the symmetrized version of the PDF of $\hat z$. 


Since $ \sup_j|\langle  \chi_j ,\tilde \pi \rangle_{Z}|<\infty$ by \cite{faridani2026testingunderpoweredliteratures}, $|\langle \psi_j ,\tilde f_{\hat{Z}} \rangle_T| = \mathcal{O}\left(2^{-j/2}\right)$.

Substituting $z=cv(p)-u$ we have:
\begin{align*}
    r(p,\gamma )  &= \frac{\int_{\mathbb{R}} [1-\Phi(c-\gamma z)] \varphi(cv(p) -z )\tilde\pi(z)dz }{\int_{\mathbb{R}} \varphi(cv(p) -z' )\tilde \pi(z')dz'}\\
    &= 1 - \frac{\int_{\mathbb{R}}  \varphi(cv(p) -z ) \tilde \pi(z) \Phi(c-\gamma z) dz }{\int_{\mathbb{R}} \varphi(cv(p) -z' )\tilde \pi(z')dz'}\\
    &= 1 - \frac{\int_{\mathbb{R}}  \varphi(u ) \tilde \pi(cv(p)-u) \Phi(c-\gamma cv(p)+\gamma u) du }{\int_{\mathbb{R}} \varphi(u' )\tilde \pi(cv(p)-u')du'}\\
    &= 1 - \frac{\int_{\mathbb{R}}  \varphi(u ) \tilde \pi_{p}(u) \Phi(c-\gamma cv(p)+\gamma u) du }{\int_{\mathbb{R}} \varphi(u' )\tilde \pi_{p}(u')du'} \\
    &= 1 - \frac{\int_{\mathbb{R}}  \varphi(u/\sqrt{2} ) \tilde \pi_{p}(u ) \exp(-u^2/4)\Phi(c-\gamma cv(p)+\gamma u  ) du }{\int_{\mathbb{R}} \varphi(u/\sqrt{2} )\exp(-u^2/4)\tilde \pi_{p}(u)du}  \label{eq:change}\\
    &= 1 - \frac{\langle \tilde \pi_{p},\exp(-u^2/4)\Phi(c-\gamma cv(p)+\gamma u  )\rangle_{Z} }{\langle \tilde \pi_{p},\exp(-u^2/4)\rangle_{Z}}  
\end{align*}

where $\tilde \pi_{p}(z)\equiv \tilde \pi(cv(p)-z)$. 

Define $\chi_j(x) \equiv \frac{\He{j}{x/\sqrt{2}}}{\sqrt{j!}}$. Then since the $\chi_j$ form an orthonormal basis:
\begin{align*}
    \exp(-u^2/4)\Phi(c-\gamma cv(p)+\gamma u  ) &= \sum_{j=0}^{\infty} a_j  \chi_j(u)\\
    \exp(-u^2/4) &= \sum_{j=0}^{\infty} d_j  \chi_j(u)\\
    a_j &= \langle  \exp(-u^2/4)\Phi(c-\gamma cv(p)+\gamma u  ),\chi_j\rangle_{Z}\\
     d_j &= \langle  \exp(-u^2/4),\chi_j\rangle_{Z}
\end{align*}

The $\chi_j$ form an orthonormal basis of  the set of all functions with finite norm $\langle  f,f \rangle_Z <\infty $ and the assumed bounded height of $\pi$ guarantees that $\langle \tilde \pi_p,\tilde\pi_p \rangle_Z <\infty $. So by Parseval's Formula there are unknown coefficients $b_j$ such that:
\begin{align*}
    \tilde \pi_p(z) &= \sum_{j=0}^\infty b_j\chi_j(z)\\
   b_j &= \langle \tilde \pi_p, \chi_j\rangle_Z =\mathbb{E}\left[ \varphi((cv(p)-S|z|)/\sqrt{2})\chi_j(cv(p) - S|z|)/\sqrt{2}\right]
\end{align*}

By the orthonormality of $\chi_j$ and the cancelling of common factors:
\begin{align*}
r(p,\gamma)&=  1 - \frac{  \sum_{j=0}^\infty a_j b_j }{\sum_{j=0}^\infty d_j b_j} 
\end{align*}

Again by the orthonormality of the Hermite polynomials:
\begin{align*}
    a_j &= \left\langle  \exp(-u^2/4)\Phi(c-\gamma cv(p)+\gamma u  ), \frac{\He{j}{u/\sqrt{2}}}{\sqrt{j!}} \right\rangle_{Z}\\
    d_j &= \left\langle  \exp(-u^2/4), \frac{\He{j}{u/\sqrt{2}}}{\sqrt{j!}} \right\rangle_{Z}
\end{align*}

Thus the bias of approximating $r(p,\gamma)$ with a sum up to $J_n$ decays exponentially fast in $J_n$:
\begin{align*}
    r(p,\gamma) - \left(1 - \frac{  \sum_{j=0}^{J_n} a_j b_j }{\sum_{j=0}^{J_n} d_j b_j} \right)&= \frac{  \sum_{j=0}^{\infty} a_j b_j }{\sum_{j=0}^{\infty} d_j b_j}- \frac{  \sum_{j=0}^{J_n} a_j b_j }{\sum_{j=0}^{J_n} d_j b_j} \\
  &\equiv \frac{N_0}{D_0}-\frac{N_n}{D_n}\\
  &= \frac{N_0D_n-N_nD_0}{D_0D_n}\\
  &= \frac{N_nD_n+(N_0-N_n)D_n-N_nD_n -N_n(D_0-D_n) }{D_0D_n}\\
  &= \frac{(N_0-N_n)D_n -N_n(D_0-D_n) }{D_0D_n}
\end{align*}

Since $D_0>0$ and $D_n \to D_0$, then for large enough $J_n$, $D_n>D_0/2>0$. Thus, using Lemmas \ref{lem:bound_numerator_coeffs} and \ref{lem:bound_denominator_coeffs} and the fact that the $|b_j|$ are bounded because $\langle\tilde\pi,\tilde\pi\rangle_{Z}<\infty$:
\begin{align*}
\left|\frac{(N_0-N_n)D_n -N_n(D_0-D_n) }{D_0D_n}\right| &\lesssim |N_0-N_n|+|D_0-D_n|\\ 
     \left|  r(p,\gamma) - \left(1 - \frac{  \sum_{j=0}^{J_n} a_j b_j }{\sum_{j=0}^{J_n} d_j b_j} \right)\right| &\lesssim \left|  \sum_{j=J_n+1}^{\infty} a_j b_j\right| +\left|\sum_{j=J_n+1}^{\infty} d_j b_j\right|\\
     &\lesssim  \sum_{j=J_n+1}^{\infty} \left| a_j \right| +\sum_{j=J_n+1}^{\infty} \left|d_j \right|\\
       &\lesssim  \sum_{j=J_n+1}^\infty j^{3/4} \left(\frac{{1+2\gamma^2}}{{2+2\gamma^2}}\right)^{j/2}+\frac{1}{2^{j/2}} 
\end{align*}

The dominant truncation bias comes from the numerator: $ \sum_{j=J_n+1}^\infty j^{3/4} \left(\frac{{1+2\gamma^2}}{{2+2\gamma^2}}\right)^{j/2}$. For ease of notation let $\rho_r \equiv \sqrt{\frac{{1+2\gamma^2}}{{2+2\gamma^2}}}$
\begin{align*}
     \sum_{j=J_n+1}^\infty j^{3/4} \left(\frac{{1+2\gamma^2}}{{2+2\gamma^2}}\right)^{j/2}&=  \sum_{j=J_n+1}^\infty j^{3/4} \rho_r ^j\\
     &=\rho_r^{J_n+1}(J_n+1)^{3/4}\sum_{k=0}^\infty \left(1+\frac{k}{J_n+1}\right)^{3/4}\rho_r ^k\\
     &\leq \rho_r^{J_n+1}(J_n+1)^{3/4}\sum_{k=0}^\infty 2(1+k)\rho_r ^k\\
     &= \rho_r^{J_n+1}(J_n+1)^{3/4}\frac{2}{(1-\rho_r )^2}\\
     &= \mathcal{O}\left( \rho_r ^{J_n}J_n^{3/4}\right)=\mathcal{O}\left( \left(\frac{{1+2\gamma^2}}{{2+2\gamma^2}}\right)^{J_n/2}J_n^{3/4}\right)
\end{align*}

 So the bias decays with:
\begin{align*}
    \left|  r(p,\gamma ) - \left(1 - \frac{  \sum_{j=0}^{J_n} a_j b_j }{\sum_{j=0}^{J_n} d_j b_j} \right)\right| \lesssim \left(\frac{{1+2\gamma^2}}{{2+2\gamma^2}}\right)^{J_n/2}J_n^{3/4}
\end{align*}

The $a_j,d_j$ are known and can be computed via numerical integration but the $b_j$ must be estimated. To do the estimation first notice that:
\begin{align*}
    b_j = \langle \tilde \pi_p,\chi_j \rangle_{Z} = \mathbb{E}\left[ \chi_j(cv(p)- Sz)\varphi((cv(p)- Sz)/\sqrt{2})/\sqrt{2}\right]
\end{align*}

Since the $\sqrt{2}$ appears in both numerator and denominator of $r(p,\gamma)$, it can be ignored from now on. By the Singular Value Decomposition in Example 1 of \cite{CarrascoFlorens} and \cite{faridani2026testingunderpoweredliteratures}:
\begin{align*}
     \mathbb{E}\left[ \chi_j(cv(p)- SZ)\varphi((cv(p)- SZ)/\sqrt{2})\right] &= 2^{j/2}  \mathbb{E}\left[ \psi_j(cv(p)- S\hat{Z})\varphi(cv(p)- S\hat{Z})\right]\\
     &=2^{j/2}\mathbb{E}\left[ \psi_j(cv(p)- S|\hat{Z}|)\varphi(cv(p)- S|\hat{Z}|)\right]\\
     &= \frac{1}{2}2^{j/2}\mathbb{E}\left[ \psi_j(cv(p)- |\hat{Z}|)\varphi(cv(p)- |\hat{Z}|)\right]\\
     &\quad +\frac{1}{2}2^{j/2}\mathbb{E}\left[ \psi_j(cv(p)+ |\hat{Z}|)\varphi(cv(p)+ |\hat{Z}|)\right]
\end{align*}

Substituting and multiplying numerator and denominator by a constant:
\begin{align*}
     r(p,\gamma) &=     1 - \frac{ \sum_{j=0}^\infty a_j2^{j/2}  \mu_j  }{  \sum_{j=0}^\infty d_j2^{j/2}  \mu_j }\\
      \mu_j &\equiv \mathbb{E}\left[ \psi_j(cv(p)- |\hat{Z}|)\varphi(cv(p)- |\hat{Z}|)\right]\frac{1}{\mathbb{P}(D=1)}\\
     &\quad +\mathbb{E}\left[ \psi_j(cv(p)+ |\hat{Z}|)\varphi(cv(p)+ |\hat{Z}|)\right]\frac{1}{\mathbb{P}(D=1)} 
\end{align*}

Therefore, our estimator of $r(p,\gamma )$ is:
\begin{align*}
    \widehat{r}(p,\gamma) &\equiv  1 - \frac{ \sum_{j=0}^{J_n} a_j2^{j/2}  \widehat{\mu}_j  }{\sum_{j=0}^{J_n} d_j2^{j/2}  \widehat{\mu}_j}
\end{align*}

By Lemma \ref{lem:mus}, for any non-negative weights $\alpha_j$ the estimators $\widehat{\mu}_j$ satisfy:
\begin{align*}
   \sum_{j=0}^{J_n} \alpha_j |\widehat{\mu}_j-\mu_j| &\lesssim n^{-q}\sum_{j=0}^{J_n}|\alpha_j|
\end{align*}

So we can bound the stochastic part of the estimation error. 
\begin{align*}
b_j &=  \frac{\mathbb{P}(D=1)}{2\sqrt{2}}2^{j/2}\mu_j\\
    \sum_{j=0}^{J_n} a_j2^{j/2}  \widehat{\mu}_j -\frac{2\sqrt{2}}{\mathbb{P}(D=1)}\sum_{j=0}^{J_n} a_jb_j &=  \sum_{j=0}^{J_n} a_j2^{j/2}  \widehat{\mu}_j -\sum_{j=0}^{J_n} a_j2^{j/2}\mu_j \\
    &=\mathcal{O}_p\left(n^{-q}\sum_{j=0}^{J_n}|a_j|2^{j/2}\right)\\
    &=\mathcal{O}_p\left(n^{-q}\sum_{j=0}^{J_n}j^{3/4}2^{j/2}  \left(\frac{{1+2\gamma^2}}{{2+2\gamma^2}} \right)^{j/2}\right)\\ 
    &= \mathcal{O}_p\left(n^{-q}J_n^{3/4}2^{J_n/2}  \left(\frac{{1+2\gamma^2}}{{2+2\gamma^2}} \right)^{J_n/2}\right)\\
    \sum_{j=0}^{J_n} d_j2^{j/2}  \widehat{\mu}_j - \frac{2\sqrt{2}}{\mathbb{P}(D=1)}\sum_{j=0}^{J_n} d_jb_j &=\mathcal{O}_p\left(n^{-q}\sum_{j=0}^{J_n}2^{j/2}|d_j|\right) =\mathcal{O}_p\left(n^{-q}J_n\right) 
\end{align*}

We already showed that for large enough $J_n$, $\left|\sum_{j=0}^{J_n} d_j2^{j/2}  {\mu}_j\right| > \left|\sum_{j=0}^{\infty} d_j2^{j/2}  {\mu}_j\right|/2$. So with high probability the denominator is bounded away from zero: $$\mathbb{P}\left(\left|\sum_{j=0}^{J_n} d_j2^{j/2}  \widehat{\mu}_j\right| > \left|\sum_{j=0}^{\infty} d_j2^{j/2}  {\mu}_j\right|/4\right)\to 1$$ 
 
The factors of $ \frac{2\sqrt{2}}{\mathbb{P}(D=1)}$ cancel in the fraction. Notice also that the stochastic error of the denominator is dominated. Thus:
\begin{align*}
    \widehat{r}(p,\gamma) - \left(1 - \frac{  \sum_{j=0}^{J_n} a_j b_j }{\sum_{j=0}^{J_n} d_j b_j} \right)  &=  \mathcal{O}_p\left( n^{-q}J_n+n^{-q}J_n^{3/4}2^{J_n/2}  \left(\frac{{1+2\gamma^2}}{{2+2\gamma^2}} \right)^{J_n/2}\right)
\end{align*}

For fixed $\gamma$, the $n^{-q}J_n$ term is dominated. We already bounded the bias. Thus the total estimation error is:
\begin{align*}
     \widehat{r}(p,\gamma)  - r(p,\gamma) &= \mathcal{O}_p\left( \left(\frac{{1+2\gamma^2}}{{2+2\gamma^2}}\right)^{J_n/2}J_n^{3/4}\left( n^{-q} 2^{J_n/2} +1\right)\right)
\end{align*}

For any $A>0$, suppose we set:
\begin{align*}
    J_n &= \lceil \beta \log (An) \rceil
   \end{align*}
Then the $J_n^{3/4}$ factor in the estimation error will only affect the rate up to log factors so we ignore it in the optimization. Now we optimize over $\beta > 0$:

If  $n^{-q} 2^{J_n/2} \lesssim 1$, then $2^{J_n/2}=\mathcal{O}(n^q)$, so $J_n \sim 2\log_2(Dn^q)=\frac{2q}{\log 2} \log (An)$. So the final rate is:
\begin{align*}
     \widehat{r}(p,\gamma)  - r(p,\gamma) &= \mathcal{O}_p\left( \left(\frac{{1+2\gamma^2}}{{2+2\gamma^2}}\right)^{\frac{q}{\log 2} \log (n)}(\log n )^{3/4}\right)\\
     &= \mathcal{O}_p\left(  n^{q\log_2\left(\frac{{1+2\gamma^2}}{{2+2\gamma^2}}\right)}(\log n )^{3/4}\right)
\end{align*}

\end{proof}

\begin{lemma}\label{lem:mus}

Let $q \leq \frac{1}{2}$. Let the assumptions of Theorem \ref{thm:identification_nonparm} hold. For any weights $\alpha_j$ the estimators $\widehat{\mu}_j$ satisfy:
\begin{align*}
   \sum_{j=0}^{J_n} |\alpha_j| |\widehat{\mu}_j-\mu_j| &= \mathcal{O}_p\left( n^{-q}\sum_{j=0}^{J_n}|\alpha_j|\right)
\end{align*}
\end{lemma}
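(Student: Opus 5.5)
The plan is to split the estimation error of each $\widehat{\mu}_j$ into a \emph{selection-estimation} part and an \emph{oracle sampling} part, and to bound both uniformly in $j$. Uniformity in $j$ is what turns the bound into $n^{-q}\sum_j|\alpha_j|$ with no dependence on $J_n$.

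\textbf{Uniform bound on the summands.} Write
\[
g_j(x)\equiv \psi_j(cv(p)-x)\varphi(cv(p)-x)+\psi_j(cv(p)+x)\varphi(cv(p)+x).
\]
By Cramér's inequality for Hermite polynomials, $|\He{j}{x}|e^{-x^2/4}\le K\sqrt{j!}$ with $K\approx 1.086$ independent of $j$. Hence
\[
|\psi_j(y)\varphi(y)|\le K e^{-y^2/4}/\sqrt{2\pi}\le K/\sqrt{2\pi},
\qquad\text{so}\qquad
\sup_{j}\sup_{x}|g_j(x)|\le B\equiv 2K/\sqrt{2\pi}.
\]
Since $\inf_{\theta,t}s_\theta(t)\equiv\underline{s}>0$, the weighted summands $g_j(|\hat z_i|)/s_{\theta_0}(\hat z_i)$ are bounded by $B/\underline{s}$, uniformly in $i$ and $j$.

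\textbf{Decomposition.} Let $\widetilde{\mu}_j\equiv n^{-1}\sum_i g_j(|\hat z_i|)/s_{\theta_0}(\hat z_i)$ be the oracle estimator that uses the true $\theta_0$. Then
\[
\widehat{\mu}_j-\mu_j=(\widehat{\mu}_j-\widetilde{\mu}_j)+(\widetilde{\mu}_j-\mu_j).
\]

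For the first term,
\[
|\widehat{\mu}_j-\widetilde{\mu}_j|\le B\,\sup_t\bigl|1/s_{\widehat\theta_n}(t)-1/s_{\theta_0}(t)\bigr|=\mathcal{O}_p(n^{-q}),
\]
and this is a single random variable that does not depend on $j$. Therefore
\[
\sum_j|\alpha_j||\widehat{\mu}_j-\widetilde{\mu}_j|\le \mathcal{O}_p(n^{-q})\sum_j|\alpha_j|.
\]

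For the second term, first check unbiasedness. By Bayes' rule, as in the proof of Theorem~\ref{thm:identification_nonparm}, and because $s_{\theta_0}$ is even,
\[
\mathbb{E}[h(|\hat z|)/s_{\theta_0}(\hat z)\mid D=1]=\mathbb{E}[h(|\hat z|)]/\mathbb{P}(D=1),
\]
so $\mathbb{E}\widetilde{\mu}_j=\mu_j$.

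\textbf{Variance under clustering.} Group the observations by article. By Assumption~\ref{assum:sample}, articles are independent and each contributes at most $L$ observations. Within an article, Cauchy--Schwarz bounds each covariance by $(B/\underline{s})^2$. This gives
\[
\operatorname{Var}(\widetilde{\mu}_j)\le L(B/\underline{s})^2/n
\quad\text{uniformly in } j,
\]
and hence $\mathbb{E}|\widetilde{\mu}_j-\mu_j|\le C n^{-1/2}$ with $C=\sqrt{L}\,B/\underline{s}$.

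\textbf{Combining.} Summing the weighted expectations gives
\[
\mathbb{E}\sum_j|\alpha_j||\widetilde{\mu}_j-\mu_j|\le Cn^{-1/2}\sum_j|\alpha_j|,
\]
so by Markov's inequality this sum is $\mathcal{O}_p(n^{-1/2}\sum_j|\alpha_j|)$. Because $q\le 1/2$, we have $n^{-1/2}\le n^{-q}$, and adding the two terms proves the lemma. In the known-$\theta_0$ case the first term vanishes and the rate is $n^{-1/2}$.

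\textbf{Main obstacle.} The crux is that every constant must be uniform in $j$, because $J_n\to\infty$ and a union bound over $j$ would lose a factor of $J_n$. Cramér's bound on the normalized Hermite functions is what delivers this. If it were unavailable, I would fall back on a crude bound of order $j^{-1/12}$ or a constant from Szegő's asymptotics, either of which still suffices.
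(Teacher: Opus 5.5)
Your proposal is correct and follows essentially the same route as the paper. Both arguments use the same oracle versus estimated-weight decomposition, a $j$-uniform bound on the Hermite summands via Cram\'er's inequality, and a clustered variance bound with Markov's inequality for the oracle term. Both then use the uniform $n^{-q}$ rate of the inverse selection weights for the remaining term, with $q\le 1/2$ making it dominant. Your version simply makes the constants and the within-article covariance count explicit.
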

\begin{proof}

\begin{align}
      \mu_j &\equiv \mathbb{E}\left[ \psi_j(cv(p)- |\hat{Z}|)\varphi(cv(p)- |\hat{Z}|)\right]\frac{1}{\mathbb{P}(D=1)}\\
     &\quad +\mathbb{E}\left[ \psi_j(cv(p)+ |\hat{Z}|)\varphi(cv(p)+ |\hat{Z}|)\right]\frac{1}{\mathbb{P}(D=1)}
\end{align}

Define the shorthand $\upsilon(x)$:
\begin{align*}
    \upsilon_j(x) &=  \psi_j(cv(p)- |x|)\varphi(cv(p)- |x|)+ \psi_j(cv(p)+ |x|)\varphi(cv(p)+ |x|)\\
    \mu_j &= \mathbb{E}\left[\upsilon_j(|\hat z|)\right]\frac{1}{\mathbb{P}(D=1)}
\end{align*}

Now add publication bias. Recall that the conditional probability of reporting is equal to $\text{Pr}\left(D=1\:|\: T=t\right) = s_{\theta_0}(t)$. In the presence of publication bias, the $\hat{z}_i$ need to be weighted to remove it using Bayes' Rule:
\begin{align*}
\mathbb{E}\left[ \upsilon_j(|\hat z|)\right]\frac{1}{\mathbb{P}(D=1)}=  \mathbb{E}\left[\frac{\upsilon_j(|\hat z|)}{s_{\theta_0}(\hat{z})}\:|\: D=1\right]
\end{align*}

Since $s$ is even in $\hat z$ by assumption, $s_{\theta}(\hat z) = s_{\theta}(|\hat z|)$. Thus:
\begin{align*}
    \mu_j&= \mathbb{E}\left[\frac{\upsilon_j(|\hat z|)}{s_{\theta_0}(\hat{z})}\:|\: D=1\right]\\
    \widehat{\mu}_j &=   \frac{1}{n}\sum_{i=1}^n \frac{\upsilon_j(|\hat z_i|)}{s_{\widehat{\theta}_n}(|\hat{z}_i|)}
\end{align*}

Suppose we have an estimator $\widehat{\theta}_n$ such that for some fixed $q \in\left(0, \frac{1}{2}\right]$:
\begin{align*}
    \sup_{t\in \mathbb{R}} \left|\frac{1}{s_{\widehat{\theta}_n}(t)} -\frac{1}{s_{{\theta}_0}(t)} \right| = \mathcal{O}_p\left(n^{-q}\right)
\end{align*}

Define $ \tilde{\mu}_j \equiv   \frac{1}{n}\sum_{i=1}^n \frac{\upsilon_j(|\hat z_i|)}{s_{{\theta}_0}(|\hat z_i|)}$. Now we break the estimation error into two pieces. 
\begin{align}\label{eq:twopieces}
    \sum_{j=0}^{J_n}\alpha_j (\widehat{\mu}_j-\mu_j) &= \sum_{j=0}^{J_n}\alpha_j (\tilde{\mu}_j-\mu_j)+\sum_{j=0}^{J_n}\alpha_j (\widehat{\mu}_j-\tilde{\mu}_j)
\end{align}

Now we bound the first sum of (\ref{eq:twopieces}). Notice that since $|\hat z_i|$ are identically distributed, $\mathbb{E}[ \tilde{\mu}_j ] = \mu_j$. By assumption, $\inf_{t,\theta} s_{\theta}(t)>0$ and by \cite{HermiteBound} we have $\sup_{j,x}|\varphi(x)\psi_j(x)|<\infty$. So $\sup_{j,x}|\upsilon_{j}(x)|<\infty$. Moreover, since observations are assumed to be clustered each summand in $\tilde \mu_j$ is correlated with at most $L\geq 1$ others. So there is some constant $C>0$ such that:
\begin{align*}
    \sup_{j\in\{1,2,\ldots\}}\mathbb{E}[|\mu_j-\tilde{\mu}_j |]&\leq    \sup_{j\in\{1,2,\ldots\}}\sqrt{\mathbb{V}[\tilde{\mu}_j]} \leq \frac{C}{\sqrt{n}}\\
    \sum_{j=0}^{J_n} |\alpha_j| \mathbb{E}[|\mu_j-\tilde{\mu}_j |] &\leq \frac{C}{\sqrt{n}}\sum_{j=0}^{J_n} |\alpha_j|
\end{align*}

All moments are implicitly conditional on publication of the sample of $\hat z_i$. By Markov's Inequality:
\begin{align*}
    \sum_{j=0}^{J_n}|\alpha_j| |\tilde{\mu}_j-\mu_j| = \mathcal{O}_p\left( n^{-1/2}\sum_{j=0}^{J_n}|\alpha_j|\right)
\end{align*}

Now we bound the second sum of (\ref{eq:twopieces}). By assumption, $\inf_{t,\theta} s_{\theta}(t)>0$ for all $\theta,t$. Therefore:
\begin{align*}
    \sup_{j\in \{1,\cdots J_n\}}|\widehat{\mu}_j-\tilde{\mu}_j| &=\mathcal{O}_p\left(n^{-q}\right)\\
    \sum_{j=0}^{J_n}|\alpha_j| |\widehat{\mu}_j-\tilde{\mu}_j| &= \mathcal{O}_p\left( n^{-q}\sum_{j=0}^{J_n}|\alpha_j|\right)
\end{align*}

Since $q \leq \frac{1}{2}$, the second sum is the dominant one in (\ref{eq:twopieces}):
\begin{align*}
    \sum_{j=0}^{J_n}|\alpha_j| |\widehat{\mu}_j-\mu_j| = \mathcal{O}_p\left( n^{-q}\sum_{j=0}^{J_n}|\alpha_j|\right)
\end{align*}

\end{proof}

\begin{lemma}\label{lem:bound_numerator_coeffs} Let $\gamma > 0$ be the square root of the ratio of the replication sample size to the original sample size. Then:
   $ |a_j|  \lesssim j^{3/4} \left(\frac{{1+2\gamma^2}}{{2+2\gamma^2}}\right)^{j/2} $ for $j>0$.
\end{lemma}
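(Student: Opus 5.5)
We bound $a_j$ through its generating function in $j$, since this turns the problem into a growth estimate for an explicit entire function, which Cauchy's inequality then converts into a bound on its Taylor coefficients. Write $b\equiv\sqrt2\gamma$ and $k\equiv c-\gamma\, cv(p)$. Substituting $u=\sqrt2 t$ in the definition $a_j=\langle \exp(-u^2/4)\Phi(k+\gamma u),\chi_j\rangle_Z$ gives the form in the footnote to \eqref{equation:np_estimator}:
\[
a_j=\frac{1}{\sqrt{j!}}\int_{\mathbb R}\varphi(t)e^{-t^2/2}\Phi(k+bt)\,\He{j}{t}\,dt .
\]

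\textbf{Step 1: closed form of the generating function.} We use the Hermite generating function $\sum_{j\ge0}\He{j}{t}s^j/j!=\exp(st-s^2/2)$ and interchange sum and integral; this is justified because the integrand has Gaussian decay. The result is
\[
A(s)\equiv\sum_{j\ge0}\frac{a_j}{\sqrt{j!}}s^j=\int_{\mathbb R}\varphi(t)e^{-t^2/2}e^{st-s^2/2}\Phi(k+bt)\,dt .
\]
Completing the square gives $\varphi(t)e^{-t^2/2}e^{st}=\frac{1}{\sqrt{2\pi}}e^{s^2/4}e^{-(t-s/2)^2}$. So, for real $s$, the integral is an expectation over $t\sim N(s/2,1/2)$, namely $\frac{1}{\sqrt2}e^{-s^2/4}\,\mathbb E[\Phi(k+bt)]$. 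The identity $\mathbb E\,\Phi(m+\sigma Y)=\Phi\bigl(m/\sqrt{1+\sigma^2}\bigr)$ for $Y\sim N(0,1)$ then yields
\[
A(s)=\frac{1}{\sqrt2}\,e^{-s^2/4}\,\Phi\!\left(\frac{k+bs/2}{\sqrt{1+b^2/2}}\right).
\]
Both sides are entire in $s$, because $\Phi$ extends to an entire function. The identity therefore holds on all of $\mathbb C$, and $a_j/\sqrt{j!}$ is exactly the $j$th Taylor coefficient of $A$.

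\textbf{Step 2: growth of $A$ on circles.} For complex $w$ we write $\Phi(w)=\tfrac12+\tfrac1{\sqrt{2\pi}}\int_0^w e^{-v^2/2}dv$ and integrate along the segment $[0,w]$. Since $\mathrm{Re}(v^2)\ge-(\mathrm{Im}\,v)^2$ on that segment, this gives $|\Phi(w)|\le \tfrac12+|w|\exp(|\mathrm{Im}\,w|^2/2)$. On $|s|=R$ we also have $|e^{-s^2/4}|\le e^{R^2/4}$ and $|\mathrm{Im}\,w|\le bR/(2\sqrt{1+b^2/2})$. Combining these,
\[
M(R)\equiv\max_{|s|=R}|A(s)|\lesssim (1+R)\exp(\alpha R^2),\qquad \alpha=\frac14+\frac{b^2}{8(1+b^2/2)}=\frac{1+2\gamma^2}{4(1+\gamma^2)} .
\]
Here we use that both exponential factors are maximized, simultaneously, near the imaginary axis, so their bounds may be multiplied; even without that, the crude product bound suffices.

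\textbf{Step 3: Cauchy estimate and optimization.} Cauchy's inequality gives $|a_j|\le \sqrt{j!}\,M(R)R^{-j}$ for every $R>0$. We choose $R^2=j/(2\alpha)$, which minimizes $e^{\alpha R^2}R^{-j}$, so that $e^{\alpha R^2}R^{-j}=(2\alpha e/j)^{j/2}$. By Stirling, $\sqrt{j!}\lesssim j^{1/4}(j/e)^{j/2}$. Multiplying out,
\[
|a_j|\lesssim (1+\sqrt j)\,j^{1/4}(2\alpha)^{j/2}\lesssim j^{3/4}\left(\frac{1+2\gamma^2}{2+2\gamma^2}\right)^{j/2},
\]
since $2\alpha=(1+2\gamma^2)/(2+2\gamma^2)$. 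This is the claimed bound. The factor $j^{3/4}$ is exactly the $j^{1/4}$ from Stirling times the $R\asymp\sqrt j$ from the crude bound on $|\Phi|$.

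\textbf{Main obstacle.} The delicate step is Step 2, the estimate of $|\Phi|$ at complex arguments. It must capture the exponential rate $\exp(|\mathrm{Im}\,w|^2/2)$ exactly, since any loss there changes the base $\frac{1+2\gamma^2}{2+2\gamma^2}$. It must also lose at most a polynomial factor, to keep the $j^{3/4}$. The segment-integral bound above does both. The remaining points are routine: the interchange of sum and integral in Step 1, and the analytic continuation of the closed form from real to complex $s$.
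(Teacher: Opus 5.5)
Your proposal is correct and follows essentially the same argument as the paper. Both use the Hermite generating function and complete the square to get the closed form $\frac{1}{\sqrt2}e^{-s^2/4}\Phi(\cdot)$, continue it analytically, bound $|\Phi(w)|\lesssim 1+|w|e^{(\mathrm{Im}\,w)^2/2}$ along the segment $[0,w]$, and apply Cauchy's estimate with the same radius $R^2=j/(2\alpha)=j(2+2\gamma^2)/(1+2\gamma^2)$ plus Stirling; the differences are only cosmetic, such as your $(1+R)$ factor in place of the paper's restriction to large $R$.
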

\begin{proof}
\begin{align*}
        a_j &= \int_{\mathbb{R}}  \varphi(u/\sqrt{2})/\sqrt{2}\exp(-u^2/4)\Phi(c-\gamma cv(p)+\gamma u  )\chi_j(u)du \\
        &= \int_{\mathbb{R}}  \varphi(u)\exp(-u^2/4)\Phi(c-\gamma cv(p)+\sqrt{2}\gamma u  )\chi_j(u \sqrt{2})du \\
        &= \int_{\mathbb{R}}  \varphi(u)\exp(-u^2/2)\Phi(c-\gamma cv(p)+\sqrt{2}\gamma u  )\frac{\He{j}{u}}{\sqrt{j!}}du 
\end{align*}

The generating function of the Hermite polynomials (DLMF 18.12.16) is:
\begin{align*}
    \sum_{j=0}^\infty \He{j}{x} \frac{t^j}{j!} = \exp\left(xt - \frac{t^2}{2}\right)
\end{align*}

Substituting the Hermite generating function in:
\begin{align*}
    F(t) &\equiv \sum_{j=0}^\infty a_j \frac{t^j}{\sqrt{j!}} \\
    &= \int_{\mathbb{R}}  \varphi(u)\exp(-u^2/2)\Phi(c-\gamma cv(p)+\sqrt{2}\gamma u  )  \left(\sum_{j=0}^\infty \frac{\He{j}{u} t^j}{{j!}}\right)du\\
    &=  \int_{\mathbb{R}}  \varphi(u)\exp(-u^2/2)\Phi(c-\gamma cv(p)+\sqrt{2}\gamma u  )   \exp\left(ut - \frac{t^2}{2}\right)du\\
    &= \frac{1}{\sqrt{2\pi}}\int_{\mathbb{R}}  \Phi(c-\gamma cv(p)+\sqrt{2}\gamma u  )   \exp\left(ut - \frac{t^2}{2}-u^2\right)du
\end{align*}

By completing the square: $ut -\frac{t^2}{2}-u^2 = -\left(u-\frac{t}{2}\right)^2-\frac{t^2}{4}$. Thus:
\begin{align*}
    F(t) &= \frac{\exp\left(-t^2/4\right)}{\sqrt{2\pi}}\int_{\mathbb{R}}  \Phi(c-\gamma cv(p)+\sqrt{2}\gamma u  )   \exp\left(-\left(u-\frac{t}{2}\right)^2\right)du
\end{align*}

First consider the case for real $t$. Let the random variable $U$ be distributed: $U\sim N(\frac{t}{2},\frac{1}{2})$. This means that $\sqrt{2}\gamma U \sim N(\frac{\gamma }{\sqrt{2}}t,\gamma^2)$. Thus:
\begin{align*}
    F(t) &= \frac{\exp\left(-t^2/4\right)}{\sqrt{2}}\mathbb{E}\left[  \Phi(c-\gamma cv(p)+\sqrt{2}\gamma U  )   \right]\\
    &= \frac{\exp\left(-t^2/4\right)}{\sqrt{2}}\mathbb{P}\left[Z\leq c-\gamma cv(p)+\sqrt{2}\gamma  U \right]\\ 
    &= \frac{\exp\left(-t^2/4\right)}{\sqrt{2}}\mathbb{P}\left[Z_1\leq c-\gamma cv(p)+\gamma Z_2+ \frac{\gamma t}{\sqrt{2}}\right]\\
   &= \frac{\exp\left(-t^2/4\right)}{\sqrt{2}}\mathbb{P}\left[(Z_1-\gamma Z_2)/\sqrt{1+\gamma^2}\leq (c-\gamma cv(p))/\sqrt{1+\gamma^2}+\frac{t\gamma }{\sqrt{2(1+\gamma^2)}}\right]\\
    &=  \frac{\exp\left(-t^2/4\right)}{\sqrt{2}}\Phi\left( (c-\gamma cv(p))/\sqrt{1+\gamma^2}+\frac{t\gamma }{\sqrt{2(1+\gamma^2)}}\right)
\end{align*}

Since both left and right hand sides are entire functions, the equation also holds for complex $t$. So we have a generating function for the coefficients $a_j$:
\begin{align*}
     F(t) &\equiv \sum_{j=0}^\infty a_j \frac{t^j}{\sqrt{j!}} = \frac{\exp\left(-t^2/4\right)}{\sqrt{2}}\Phi\left(  (c-\gamma cv(p))/\sqrt{1+\gamma^2}+\frac{t\gamma }{\sqrt{2(1+\gamma^2)}}\right)
\end{align*}

This means that the $a_j$ are equal to the derivatives of $F(t)$ evaluated at zero:
\begin{align*}
   a_j &= \frac{F^{(j)}(0)}{\sqrt{j!}}
\end{align*}

So to bound $a_j$ we need only bound the derivatives of $F$. Since the exponential and the normal CDF are entire functions, $F$ is entire. So we can use the Cauchy Differentiation Formula (DLMF 3.4.17) where we integrate around any circle of radius $R$ in the complex plane centered at zero:
\begin{align*}
   F^{(j)}(0) &= \frac{j!}{2\pi i}\int_{|t|=R} \frac{F(t)}{t^{j+1}} dt
\end{align*}

This can be bounded in absolute value by taking the modulus of both sides:
\begin{align*}
    \left|F^{(j)}(0)\right| &\leq j! \frac{1}{R^{j}} \sup_{|t|=R}|F(t)|
\end{align*}

Since $F$ is entire, any $R$ is allowed so we can set $R = \sqrt{j}$. Next we bound $|F(t)|$ in the complex plane. The exponential component can be bounded by $\exp\left(R^2/4\right)$. The normal CDF can be bounded in the following way. For any $a \in \mathbb{R}$ and $\alpha > 0$:
\begin{align*}
    \Phi(x) &= \frac{1}{2}\left(1+\text{erf}(x/\sqrt{2})\right)\\
     \text{erf}\left(x/\sqrt{2}\right)&= \frac{2}{\sqrt{\pi}}\int_{0}^{x/\sqrt{2}}\exp(-t^2)dt\\
    |\text{erf}\left(x/\sqrt{2}\right) | &\leq |x|\exp(Im(x)^2/2)\\
     |\Phi(x)| &\leq 1+|x|\text{exp}(Im(x)^2/2)\\
     \sup_{|t|=R}|\Phi(a+t\alpha)| &\lesssim 1+(|a|+\alpha R)\exp(R^2\alpha^2/2)
\end{align*}

Let $a= (c-\gamma cv(p))/\sqrt{1+\gamma^2}$ and $\alpha = \frac{\gamma }{\sqrt{2(1+\gamma^2)}}$. Substituting into $F$:
\begin{align*}
    \sup_{|t|=R}|F(t)| &\lesssim \sup_{|t|=R}\frac{1}{\sqrt{2}}\left|\exp\left(-t^2/4\right)\Phi\left(  a+\alpha t\right)\right|\\
    &\lesssim   \exp(R^2/4)(1+(|a|+\alpha R)\exp(R^2\alpha^2/2))
\end{align*}

Notice that $\alpha \in (0,1)$. For $R>\max\{|a|/(1-\alpha),2/(1-\alpha)\}$, we have $|a|+\alpha R \leq R$ and:
\begin{align*}
     \sup_{|t|=R}|F(t)| &\lesssim \exp(R^2/4)R\exp(R^2\alpha^2/2)\\
     &= R\exp\left(R^2 (1/4+\alpha^2/2)\right)\\
     &= R\exp\left(R^2/4 \frac{1+\gamma^2+\gamma^2}{(1+\gamma^2)}\right)\\
     &=R\exp\left(R^2 \frac{1+2\gamma^2}{4(1+\gamma^2)}\right)
\end{align*}

Substituting into Cauchy's Differentiation Formula:
\begin{align*}
     \left|F^{(j)}(0)\right| &\leq j! \frac{1}{R^{j}} \sup_{|t|=R}|F(t)|\\
     &\lesssim j! \frac{1}{R^{j}} R\exp\left(R^2 \frac{1+2\gamma^2}{4(1+\gamma^2)}\right)\\
      |a_j| &\lesssim  \sqrt{j!} \frac{1}{R^{j}} R\exp\left(R^2 \frac{1+2\gamma^2}{4(1+\gamma^2)}\right)
\end{align*}

Stirling's approximation says that $\sqrt{j!}\sim \left(\frac{j}{e}\right)^{j/2}(2\pi j)^{1/4} $. So:
\begin{align*}
    |a_j| &\lesssim  \left(\frac{j}{e}\right)^{j/2} j^{1/4}   \frac{1}{R^{j}} R\exp\left(R^2 \frac{1+2\gamma^2}{4(1+\gamma^2)}\right)\\
    &= j^{j/2} j^{1/4}   \frac{1}{R^{j}} R\exp\left(R^2 \frac{1+2\gamma^2}{4(1+\gamma^2)}-\frac{j}{2}\right)
\end{align*}

Setting $R =\sqrt{j}\frac{\sqrt{2+2\gamma^2}}{\sqrt{1+2\gamma^2}}$ (which is allowed because $F$ is entire and eventually exceeds the constant $\max\{|a|/(1-\alpha),2/(1-\alpha)\}$):
\begin{align*}
     |a_j| &\lesssim   j^{j/2} j^{1/4} \sqrt{j}  j^{-j/2} \left(\frac{\sqrt{1+2\gamma^2}}{\sqrt{2+2\gamma^2}}\right)^j\\
     &= j^{3/4} \left(\frac{{1+2\gamma^2}}{{2+2\gamma^2}}\right)^{j/2}
\end{align*}

\end{proof}

\begin{lemma}\label{lem:bound_denominator_coeffs}
   $ |d_j|  \lesssim 2^{-j/2} $ for $j>0$
\end{lemma}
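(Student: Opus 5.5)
We mirror the proof of Lemma~\ref{lem:bound_numerator_coeffs}, with the factor $\Phi(\cdot)$ removed. Here the generating function becomes an explicit Gaussian, so the coefficients $d_j$ can be read off in closed form and no Cauchy-estimate step is needed.

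First, apply the same change of variables used for $a_j$ to rewrite
\[
d_j=\int_{\mathbb R}\varphi(u)\exp(-u^2/2)\frac{\He{j}{u}}{\sqrt{j!}}\,du .
\]
Then form $G(t)\equiv\sum_{j\ge 0} d_j\, t^j/\sqrt{j!}$ and substitute the Hermite generating function (DLMF 18.12.16). The interchange of sum and integral is justified because the integrand carries the factor $e^{-u^2}$, so the series converges absolutely. Completing the square exactly as in Lemma~\ref{lem:bound_numerator_coeffs} gives
\[
G(t)=\frac{1}{\sqrt{2\pi}}\int_{\mathbb R}\exp\!\left(ut-\tfrac{t^2}{2}-u^2\right)du=\frac{e^{-t^2/4}}{\sqrt 2}.
\]

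Next, expand $e^{-t^2/4}=\sum_{k\ge0}(-1)^k t^{2k}/(4^k k!)$ and match coefficients of $t^j$. This yields $d_j=0$ for odd $j$, and for $j=2k$,
\[
d_{2k}=\frac{(-1)^k}{\sqrt2}\,\frac{\sqrt{(2k)!}}{4^k\,k!}.
\]

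Finally, use $\sqrt{(2k)!}/k!=\sqrt{\binom{2k}{k}}\le\sqrt{4^k}=2^k$. This gives $|d_{2k}|\le 2^{-k}/\sqrt2=2^{-j/2}/\sqrt2$, which is the claim. Stirling's formula would in fact give the sharper bound $|d_{2k}|\sim (\pi k)^{-1/4}2^{-k}/\sqrt2$, but this improvement is not needed.

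The only step needing any care is the interchange of summation and integration in the generating-function identity, and the Gaussian factor $e^{-u^2}$ handles it. Apart from that, the argument is a closed-form computation, so I expect no real obstacle. It is simpler than the numerator case precisely because the $\Phi$ factor is absent.
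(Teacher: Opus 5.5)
Your proof is correct and follows essentially the paper's route. You use the same generating function $G(t)=e^{-t^2/4}/\sqrt{2}$ and the same closed form $d_{2k}=(-1)^k\sqrt{(2k)!}/(\sqrt{2}\,4^k k!)$ with vanishing odd coefficients; the only difference is that you finish with $\binom{2k}{k}\le 4^k$ where the paper shows $|d_{2k+2}/d_{2k}|\le 1/2$, and both give $|d_{2k}|\le 2^{-k}/\sqrt{2}$.
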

\begin{proof}

By an identical argument to the proof of Lemma \ref{lem:bound_numerator_coeffs}
\begin{align*}
       G(t) &= \sum_{j=0}^\infty d_j \frac{t^j}{\sqrt{j!}} =  \exp\left(-t^2/4\right)/\sqrt{2}\\
       d_j &= \frac{G^{(j)}(0)}{\sqrt{j!}}
\end{align*}

Using the Taylor Expansion:
\begin{align*}
    \exp\left(-t^2/4\right)  &= \sum_{k=0}^\infty\frac{(-1)^k}{4^k k!}t^{2k}
\end{align*}
So
\begin{align*}
    G^{(2k)}(0)&= \frac{(-1)^k(2k)!}{\sqrt{2}4^kk! }\\
    d_{2k}&= \frac{(-1)^k\sqrt{(2k)!}}{\sqrt{2}4^kk! }
\end{align*}

So for odd $j$, $d_j=0$. For even coefficients, notice that the ratio of successive terms is:

\begin{align*}
    \left|\frac{d_{2(k+1)}}{d_{2k}}\right| &= \sqrt{\frac{(2k+2)!}{(2k)!}} \frac{4^kk!}{4^{k+1} (k+1)!}\\
    &= \frac{\sqrt{(2k+2)(2k+1)}}{4(k+1)}\\
    &= \frac{\sqrt{(2k+2)(2k+1)}}{2(2k+2)}\\
    &= \sqrt{\frac{(2k+1)}{4(2k+2)}}\\
    &= \frac{1}{2} \sqrt{\frac{(2k+1)}{(2k+2)}}\\
    &\leq \frac{1}{2}
\end{align*}

So $d_{2j} \lesssim 2^{-j}$ and $d_{j} \lesssim 2^{-j/2}$.

\end{proof}

\section{Additional Empirical Results}\label{appendix:empirical_results}
\setcounter{table}{0}
\setcounter{figure}{0}


\begin{figure} [H]
	\centering
 	\caption{Predictive Power Curves, Systematic Replication Approach}
	\label{figure:posterior_power_curves_repest}
\includegraphics[width=0.65\textwidth]{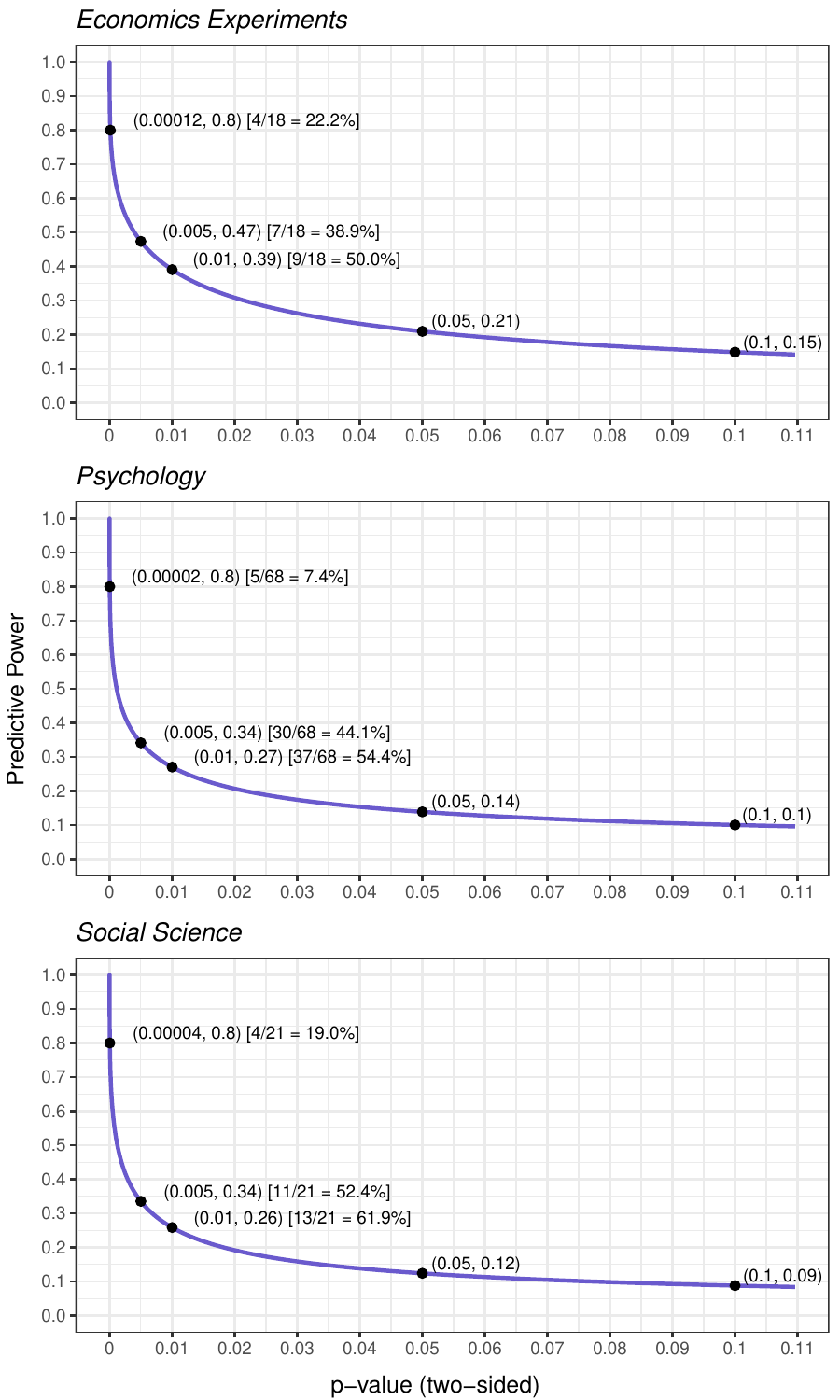} 
 \caption*{\textit{Notes}: The figure plots estimated predictive power curves for experimental economics, psychology, and experimental social science, using the systematic replication approach to estimate $\tilde{\pi}(z)$. Parameter estimates for the distribution of true effects are taken from Table 1 (Economics), Table 2 (Psychology), and Table 11 (Social Science) in \citet{Andrews2019}. All applications assume $\pi(\cdot)$ follows a gamma distribution. Predictive power is the probability that a replication with the same sample size as the original study produces a statistically significant estimate, conditional on the original study's two-sided $p$-value. Square brackets report the percentage of original studies with $p$-values at or below the indicated threshold. Underlying data are from \citet{Camerer2016}, \citet{OpenScience2015}, and \citet{Camerer2018}, respectively.}

\end{figure} 


\begin{figure} [H]
\centering
\caption{Distribution of Predictive Power Conditional on $p=0.05$}
\label{figure:power_bins}
\includegraphics[width=0.65\textwidth]{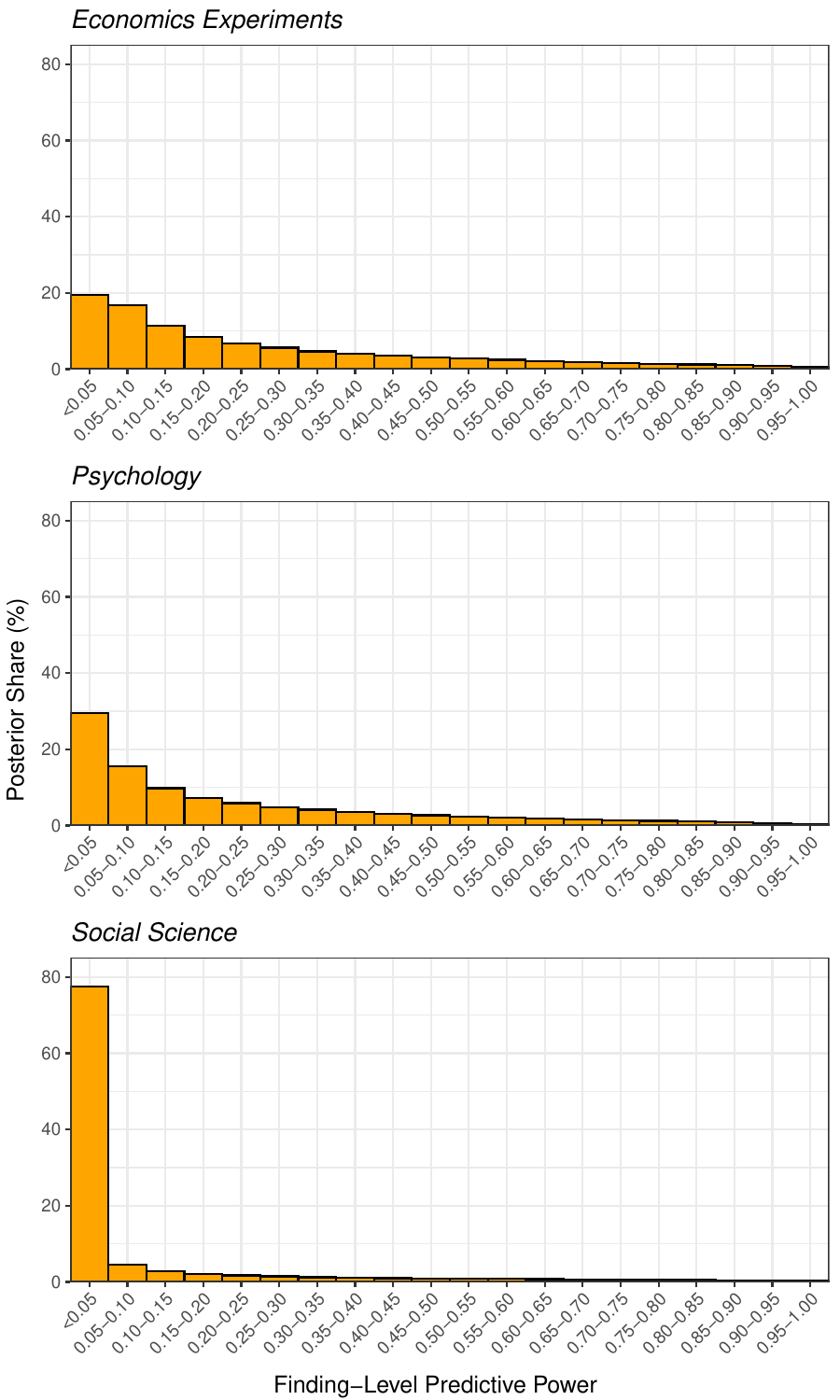}
\caption*{\textit{Notes}: The figure shows the distribution of replication probabilities conditional on an original two-sided $p$-value of 0.05. Each bar reports the share of findings whose probability of a successful replication falls within the indicated decile. A successful replication is statistically significant at the 5\% level and has the same sign as the original estimate. Replications are assumed to use the same sample size as the original study. Estimates of the distribution of true effects are based on the metastudy replication approach. Underlying data are from \citet{Camerer2016}, \citet{OpenScience2015}, and \citet{Camerer2018}, respectively.}
\end{figure}

\begin{figure}[H]
\centering
\caption{Distribution of $z$ Conditional on $p=0.05$}
\label{figure:posterior_z}
\includegraphics[width=0.675\textwidth]{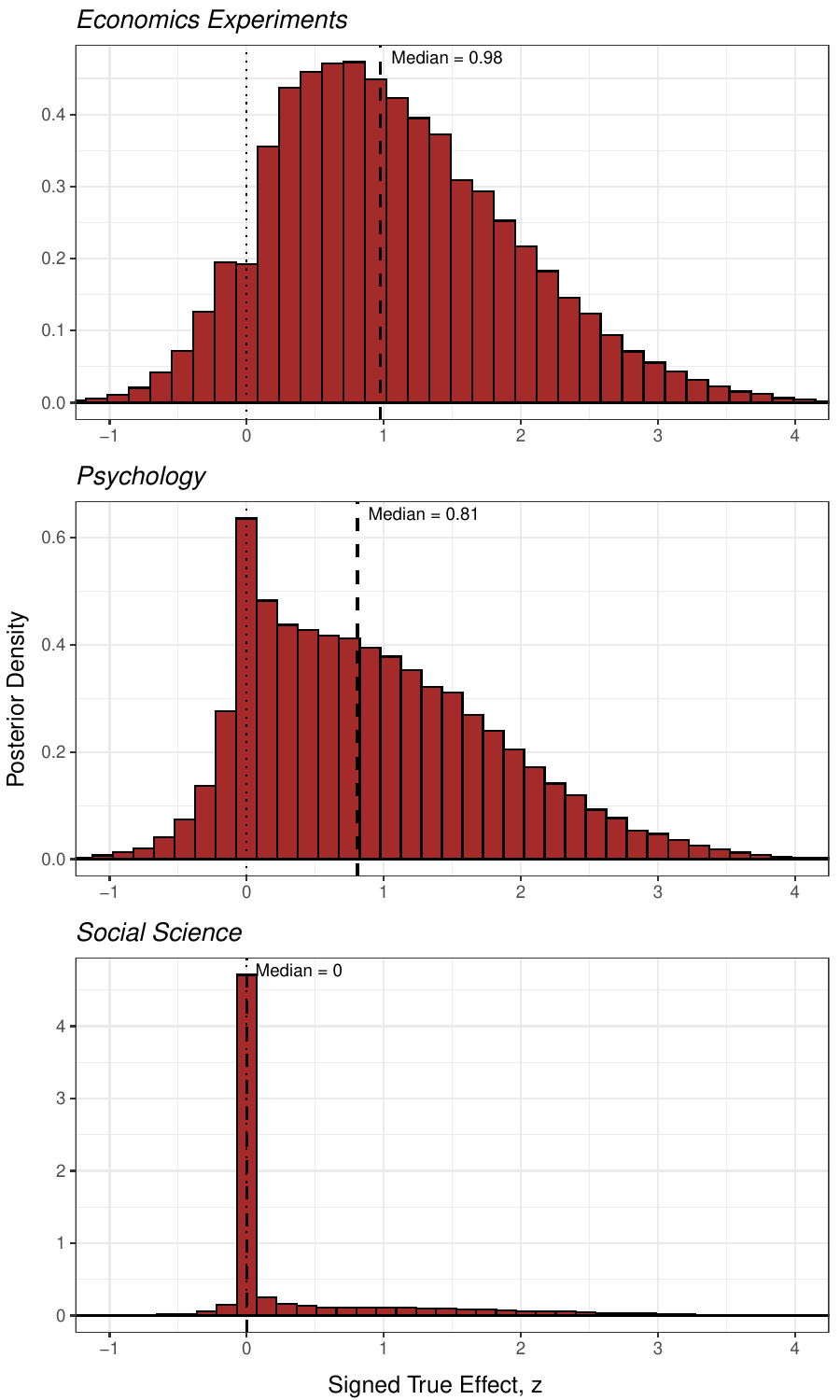}
\caption*{\textit{Notes}: The figure plots the distribution of normalized true effects, $z$, conditional on an original two-sided $p$-value of 0.05. The vertical dotted line marks $z=0$, and the vertical dashed line marks the median of the conditional distribution. Estimates of the distribution of true effects are based on the metastudy replication approach. Underlying data are from \citet{Camerer2016}, \citet{OpenScience2015}, and \citet{Camerer2018}, respectively.}
\end{figure}

\newpage
\section{High-Powered Replications Overstate Replicability}\label{appendix:alternative_rep_sample_size_rules}
\setcounter{table}{0}
\setcounter{figure}{0}

The systematic replication projects used rules to set sample sizes that generally led to larger samples than the original studies. The economics and psychology projects used a common-power rule that treats the original estimate as the true effect and targets a specified nominal power \citep{OpenScience2015, Camerer2016}. Because original estimates are noisy and tend to overstate true effects, expected power is below the nominal target; see \citet{Vu2024}. The social-science project used a fractional-power rule, targeting 90\% power against an effect equal to three-quarters of the original estimate \citep{Camerer2018}.

Table~\ref{table:replication_under_alternative_rules} compares model-based expected replication rates under the implemented high-powered sample-size rules with those under the counterfactual assumption that replications use the same sample size as the original studies. More specifically, for each design, the expected replication rate is calculated as the average across studies, $\frac{1}{N}\sum_{i=1}^N r(p_i;\sigma_{ri})$, where $r(p_i;\sigma_{ri})$ is the expected replication probability for study $i$ given its original $p$-value and the replication standard error implied by the relevant sample-size rule.

Column 3 reports expected replication rates under the implemented higher-powered designs, while Column 4 reports the counterfactual rates if replications used the original sample sizes. The higher-powered designs raise expected replication rates from 0.49 to 0.57 in economics and from 0.46 to 0.53 in psychology and social science. Column 5 shows that, at the study level, these designs increase replication probabilities by 30--36\% on average.\footnote{The final column averages the study-level ratio of high-powered to same-powered replication probabilities, rather than taking the ratio of the averages in Columns 3 and 4.} Thus, systematic replication projects tend to overstate replicability relative to replications conducted at the precision of the original studies.

\begin{table}[H] \centering 
    \scriptsize
  \caption{Expected Replication Rates With Alternative Sample Size Setting Rules} 
  \label{table:replication_under_alternative_rules} 
\begin{tabular}{@{\extracolsep{5pt}} llccc} 
\\[-1.8ex]\hline 
\hline \\[-1.8ex] 
Field & Rule & Implemented Design & Existing Design & Average Ratio \\ 
\hline \\[-1.8ex] 
Economics & Common Power Rule (92\%) & 0.57 & 0.49 & 1.33 \\ 
Psychology & Common Power Rule (92\%) & 0.53 & 0.46 & 1.30 \\ 
Social science & Fractional Power Rule (0.75, 90\%) & 0.53 & 0.46 & 1.36 \\ 
\hline \\[-1.8ex] 
\end{tabular} 
\caption*{\textit{Notes}: The table reports model-based expected replication rates under the implemented replication sample-size rules and under the counterfactual assumption that replications use the same sample size as the original studies. The final column reports the average across studies of the ratio of the implemented-design replication probability to the same-sample-size replication probability.}
\end{table}

\section{Publication Bias Estimates}\label{appendix:publication_probability_ratios}

Table \ref{tab:publication_bias} reports the estimated publication probability ratios corresponding to the estimates in Figure \ref{fig:brodeur-stacked-nonparametric}. Recall the definition of $\theta_m$ from Equation \ref{eq:stepfunction}, which is the probability of publication conditional on $|\hat z|$ lying between the $m$th and $(m-1)$th critical values. Each row reports the estimated ratio of publication probabilities immediately above and below a given critical value. For example, the entry in the first row of the RCT column means that crossing the critical value of $1.64$ increases publication probability by about 8

\begin{table}[ht]
\footnotesize
\centering
\caption{Estimated Publication-Bias Probability Ratios by Research Design}
\label{tab:publication_bias}
\begin{tabular}{lllll}
\hline \hline 
Critical Threshold & RCT & RDD & DID & IV \\ \hline 
1.64 & 1.08 & 0.98 & 1.04 & 1.28 \\ 
  1.96 & 1.03 & 0.97 & 1.02 & 1.27 \\ 
  2.58 & 0.99 & 0.70 & 0.93 & 0.96 \\   \hline 
\end{tabular}
\end{table}

\end{document}